\documentclass[aps,superscriptaddress,nofootinbib]{revtex4-2}
\setcitestyle{numbers,open={(},close={)}}

\usepackage[letterpaper, left=0.8in, right=0.8in, top=0.7in, bottom=0.7in]{geometry}

\usepackage{amssymb}
\usepackage{amsmath,mathtools}
\usepackage{amsthm}
\usepackage{braket}
\usepackage{float}

\usepackage{lmodern}
\usepackage[T1]{fontenc}

\usepackage{graphicx}%
\graphicspath{{./figures/}}

\usepackage{hyperref}
\hypersetup{
	colorlinks=true,
	linkcolor=blue,
	filecolor=blue,
	citecolor=black,      
	urlcolor=cyan,
}

\usepackage{dcolumn}%
\usepackage{bm}%

\usepackage{lipsum}
\usepackage{ifdraft}
\usepackage{color}

\newcommand{\figref}[1]{Fig.~\ref{#1}}
\newcommand{\appref}[1]{Appendix~\ref{#1}}

\newcommand{\eqnref}[1]{Eq.~(\ref{#1})}

\newcommand{\renyientropy}{R\'enyi entropy}

\DeclareMathOperator{\Tr}{Tr}

\newcommand{\TEE}[1]{I^{(#1)}_3}
\newcommand{\Dtot}[1]{D_{\mathrm{tot}}}
\newcommand{\tcr}{\mathcal{T}_{\mathrm{CR}}}
\newcommand{\tcl}{\mathcal{T}_{\mathrm{CL}}}

\usepackage{nicefrac}

\usepackage{orcidlink}

\newcommand{\pheader}[1]{\noindent \textbf{#1 } \newline}

\begin{document}

\title{%
Quantum computation at the edge of chaos}

\author{Tomohiro Hashizume\,\orcidlink{0000-0002-7154-5417}}
\email[]{tomohiro.hashizume@uni-hamburg.de}
\affiliation{
   Institute for Quantum Physics, University of Hamburg, Luruper Chaussee 149, 22761 Hamburg, Germany
}
\affiliation{The Hamburg Centre for Ultrafast Imaging, Luruper Chaussee 149, 22761, Hamburg, Germany}
\affiliation{Max Planck Institute for the Structure and Dynamics of Matter, Luruper Chaussee 149, 22761 Hamburg, Germany}

\author{Zhengjun Wang\,\orcidlink{0000-0002-7599-0382}}
\affiliation{
   Institute for Quantum Physics, University of Hamburg, Luruper Chaussee 149, 22761 Hamburg, Germany
}
\affiliation{The Hamburg Centre for Ultrafast Imaging, Luruper Chaussee 149, 22761, Hamburg, Germany}
\affiliation{Max Planck Institute for the Structure and Dynamics of Matter, Luruper Chaussee 149, 22761 Hamburg, Germany}

\author{Frank Schlawin\,\orcidlink{0000-0001-6977-5881}}
\affiliation{
   Institute for Quantum Physics, University of Hamburg, Luruper Chaussee 149, 22761 Hamburg, Germany
}
\affiliation{The Hamburg Centre for Ultrafast Imaging, Luruper Chaussee 149, 22761, Hamburg, Germany}
\affiliation{Max Planck Institute for the Structure and Dynamics of Matter, Luruper Chaussee 149, 22761 Hamburg, Germany}

\author{Dieter Jaksch\,\orcidlink{0000-0002-9704-3941}}
\affiliation{
   Institute for Quantum Physics, University of Hamburg, Luruper Chaussee 149, 22761 Hamburg, Germany
}
\affiliation{The Hamburg Centre for Ultrafast Imaging, Luruper Chaussee 149, 22761, Hamburg, Germany}
\affiliation{
   Clarendon Laboratory, University of Oxford, Parks Road, Oxford OX1 3PU, United Kingdom
}

\date{\today}

\begin{abstract}

A key challenge in classical machine learning is to mitigate
overparameterization by selecting \textit{sparse} solutions. 
We translate this concept to the quantum domain, introducing \textit{quantum sparsity} as a principle based on 
minimizing quantum information shared across multiple parties. 
This allows us to address fundamental issues in quantum data processing and 
convergence issues such as the barren plateau problem in Variational Quantum Algorithm (VQA). 
We propose a practical implementation of this principle using the topological Entanglement Entropy (TEE) as a cost function 
regularizer. 
A non-negative TEE is associated with states with a sparse structure in a suitable basis, 
while a negative TEE signals untrainable chaos.
The regularizer, therefore, guides the optimization along the critical \textit{edge of chaos} that separates these regimes. 
We link the TEE to structural complexity by analyzing quantum states encoding functions of tunable smoothness, 
deriving a quantum Nyquist-Shannon sampling theorem that bounds the resource requirements and error propagation in VQA. 
Numerically, our TEE regularizer demonstrates significantly improved convergence and precision for complex data encoding 
and ground-state search tasks. 
This work establishes quantum sparsity as a design principle for robust and efficient VQAs.

\end{abstract}

\maketitle
Quantum computing has the potential to revolutionize information processing and speed up fundamental problems in materials science, machine learning, and data analysis. 
This potential, however, hinges on our ability to design quantum circuits that avoid uncontrolled scrambling 
of quantum information that would otherwise lead to chaotic dynamics. 
A prominent example where this problem becomes apparent are the variational quantum algorithms (VQAs)
\cite{cerezoVariationalQuantumAlgorithms2021,jakschVariationalQuantumAlgorithms2023}. 
Those hybrid quantum-classical algorithms are known to suffer from a fundamental problem that threatens 
to limit or even completely destroy its practical use: barren plateaus, 
where the gradients of the loss function required for optimization of parameters $\bm{\theta}$
vanish exponentially with system size, leading to the failure of classical optimization
\cite{mccleanBarrenPlateausQuantum2018,laroccaBarrenPlateausVariational2025}.
Such gradient concentration is intrinsically linked to the circuit’s transition into a chaotic regime; 
here, the high expressibility of the ansatz, necessitated by the circuit depth $D$, 
facilitates rapid information scrambling \cite{anthonychenSpeedLimitsLocality2023}. 
As illustrated in \figref{fig:overview}a, an output state $\ket{\psi_D(\bm{\theta})}$ at depth $D$,
generated via local brickwork layers of parameterized gates (red lines), 
eventually enters a regime of high expressibility, 
where, however, the solution landscape becomes chaotic and untrainable due to rapid information scrambling. 
For VQAs to be successful, their variational quantum circuits must be sufficiently complex to encompass
good solution ansatzes without the circuit being driven beyond this edge of chaos into a chaotic, untrainable parameter regime.
It is this seemingly inevitable trade-off that threatens to destroy the potential advantage of near-term quantum computation and severely restrict the utility of quantum computing for data science more broadly.

A similar competition between complexity and untrainable chaos is also observed in classical computing, 
where it is known as the edge of chaos hypothesis 
\cite{langtonComputationEdgeChaos1990,wolframNewKindScience2002,zhangIntelligenceEdgeChaos2024}.
It posits that the capacity for complex computation emerges at the boundary between simple, ordered dynamics
and chaotic dynamics. 
At this edge, a system is sufficiently stable to store information,
yet flexible enough to process it nontrivially.
This regime avoids both trivial simplicity and the featureless chaos also encountered in barren plateaus, and offers the maximal computational capacity.

\begin{figure*}[t]
   \includegraphics[scale=1.0]{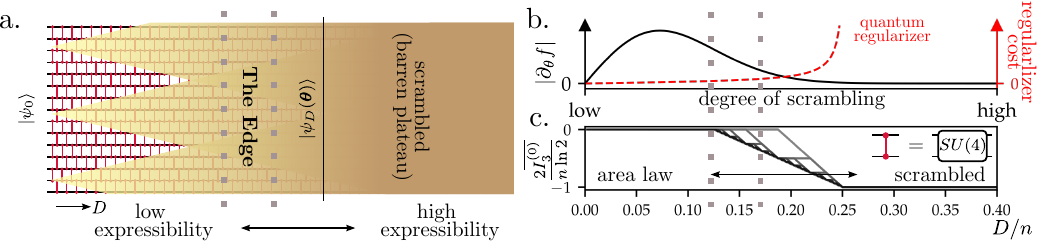}
   \caption{
      \textbf{Information spreading and scrambling in a brickwork circuit.}
      {\bf a.}
      The onset of scrambling in the brickwork circuit and the emergence of a barren plateau.
      The circuit diagram of a brickwork circuit, where qubits (black lines) are evolved from the initial state $\ket{\psi_0}$,
      via entangling parameterized nearest-neighbor gates (red).
      In a shallow circuit (small $D$), 
      the lightcones (shaded yellow) produced by the local perturbations, introduced by changes in the parameters at
      shallow depth, remain local. 
      However, when this lightcone covers the whole circuit (large $D$), local information is scrambled across the system.
      Furthermore, the overlap of these lightcones (shaded brown) exponentially dilutes the local effects on the output state 
      at depth $D$, $\ket{\psi_D(\bm{\theta})}$,
      rendering the circuit indistinguishable from a random unitary circuit.
      This diffusion of structured information hinders gradient-based learning at greater depth with higher expressibility, 
      leading to a barren plateau. The regime that ensures both trainability and high expressibility 
      lies within the critical regime at \textit{the edge}.
      {\bf b.}~Illustrated is the relation between the gradient of a function $f$,
      evaluated using a parameterized quantum circuit, with respect to a single circuit parameter $\theta\in\bm{\theta}$. 
      In conventional learning, as the training progresses, the degree of information scrambling increases, 
      reaching a barren plateau. 
      A quantum regularizer (red) that prevents scrambling, 
      ensuring that training occurs at the quantum edge region (between the gray dotted lines) 
      of the landscape with a nonvanishing gradient. 
      {\bf c.}~The change of the TEE $\TEE{0}$ with respect to the circuit depth $D$. 
      The TEE $\TEE{0}$ becomes increasingly negative with the circuit depth $D$, 
      depicting a clear transition through the edge (between the gray dotted lines) to chaos 
      (number of qubits $n=2^{3},2^{4},\ldots,2^{9}$, light to dark).
      \label{fig:overview}} 
\end{figure*}

To overcome the barren plateau problem, we may therefore look at classical optimization for inspiration. 
There, this problem is addressed using regularizers that penalize overparameterization,  %
such as the $\ell_1$ or $\ell_2$ norm of the parameters~\cite{tibshiraniRegressionShrinkageSelection1996,
donohoMostLargeUnderdetermined2006}, 
thus selecting a sparse solution~\cite{candesStableSignalRecovery2006,donohoCompressedSensing2006}. 
This line of argument suggests that 
quantum computing could benefit from establishing a quantum version of \textit{sparsity}.
It offers a new perspective on the emergence of barren plateaus and, most importantly, allows the development of quantum regularizers to protect the algorithm's convergence (\figref{fig:overview}b). 
It is easy to see that developing such a concept is nontrivial: 
penalizing the $\ell_p$-norm of $\bm{\theta}$ in a parameterized quantum circuit would effectively restrict the circuit depth, thereby limiting dynamics to a simple subclass of states such as weakly entangled matrix product states, 
and potentially forgoing any quantum advantage.
Furthermore, the sparsity of $\bm{\theta}$ typically does not immediately translate to the sparsity of the output state in the computational basis.

In this article, we propose and analyze a highly intuitive notion of quantum sparsity 
as a measure of delocalized quantum information, 
and demonstrate that the topological entanglement entropy (TEE) can be used as a sensitive measure thereof. 
We first show the properties of TEE and its relation to the underlying sparsity by numerically investigating the TEE 
for quantum states encoding functions with tunable differentiability, continuity, and fractal dimension, formally linking a state's structural complexity to quantum chaotic signatures in the encoding quantum circuit. 
This leads us to devise a quantum Nyquist-Shannon sampling theorem (QNSST) for amplitude-encoded data. 
It establishes a polylogarithmic bound on the required quantum resources for efficiently and faithfully amplitude encoding complex classical data. 
Finally, we propose and benchmark the performance of TEE as a quantum regularizer to avoid the barren plateau problem. 
We show up to two orders of magnitude improved convergence in two distinct problems: 
(i) the high-fidelity preparation of a quantum state representing turbulent fluid flow, and 
(ii) a Variational Quantum Eigensolver (VQE) task involving incommensurate circuit and problem geometries.
In both cases, we demonstrate that the TEE regularizer successfully avoids barren plateaus by constraining the optimization to a trainable parameter regime.

In an overdetermined optimization problem, the loss function is minimized over a multidimensional subspace. 
We define the quantum sparsity of a solution, for a pure $n$-qubit state, 
as the amount of nonlocally stored (quantum) information across multiple qubits. 
The sparsest solution is then selected by minimizing a measure of this delocalized quantum information, 
such as the TEE
\cite{kitaevTopologicalEntanglementEntropy2006,hashizumeTunableGeometriesSparse2022}. 
For three subregions $A$, $B$, and $C$, the TEE is defined as
\begin{align}
   I^{(\alpha)}_3(A,B,C) &= I^{(\alpha)}_2(A,B) + I^{(\alpha)}_2(A,C) - I^{(\alpha)}_2(A,BC), 
\end{align}
in terms of the mutual information $I^{(\alpha)}(A,B)= S^{(\alpha)}(A) + S^{(\alpha)}(B) - S^{(\alpha)}(AB)$,
formulated using the \renyientropy{} of order $\alpha$. 
This entropy is defined by $S^{(\alpha)}(A) = \frac{1}{1-\alpha} \ln \Tr \{ \rho_{A}^\alpha \}$, in which the reduced
density operator (RDO), $\rho_A$, is obtained by taking the partial trace of the system's density operator $\rho$ over 
the complementary region $\bar{A}$, such that $\rho_A = \mathrm{Tr}_{\bar{A}} \rho  $.

In the context of variational quantum circuits, 
a large negative TEE naturally measures the degree to which information regarding a local change is encoded globally; 
effectively, it quantifies how the circuit scrambles local changes in parameters into an untrainable, nonlocal form. 
For brickwork circuits, the amount of entanglement generated scales at most linearly with $D$ 
due to the Lieb-Robinson bound \cite{anthonychenSpeedLimitsLocality2023}. 
Consequently, as the circuit deepens, the change in a parameterized local gate 
is diluted across the qubits within its lightcone, as depicted in \figref{fig:overview}a. 
Conversely, the RDO of a subregion of an output state is affected by all gates that are causally connected, 
leading to overparameterization for sufficiently deep circuits.
TEE thus provides a diagnostic for the onset of this regime, 
which we further analyze by examining its behavior at specific entropy orders. 

To build a qualitative understanding of the TEE, 
we begin our discussion by examining the Hartley limit ($\alpha=0$).
When the circuit is composed of parameterized $\mathrm{SU}(4)$ unitaries (\figref{fig:overview}c.~inset) 
with randomly chosen parameters sampled from a Haar random distribution, 
the average Hartley entropy of a subsystem becomes
classically computable via graph-theoretic methods for large system sizes
\cite{skinnerMeasurementInducedPhaseTransitions2019}. 
This approach provides coarse-grained insight into the system's entanglement structure.

The main panel of \figref{fig:overview}c.~displays the ensemble average of $\overline{\TEE{0}}$,
(computed via the graph theoretic method) as a function of $D$, where $\TEE{\alpha}$ is a shorthand notation for 
the TEE of three contiguous equal-sized regions of size $n/4$, while the overline denotes the ensemble average. 
$\TEE{0}$ vanishes exactly up to $D\simeq n/8$, 
the minimal depth for the lightcone of a qubit at the center of the region to reach both of its neighboring subregions. 
This transition defines the \textit{quantum edge} depicted in \figref{fig:overview}, 
and identifies the region about a point 
where scrambling begins to cause the TEE to deviate from zero  \cite{kuriyattilOnsetScramblingDynamical2023}. 

As $D$ increases, $\TEE{0}$ rapidly decreases towards its fully scrambling limit of 
$-(n\ln 2)/2$ at $D = n/4$. 
This behavior remains qualitatively the same for larger $\alpha \geq 1$.
However, the depth $D$ required to reach the fully scrambled (Haar random) limit scales like
$\mathcal{O}(\alpha^{10} n)$ (for $\alpha > 1$) \cite{harrowApproximateUnitaryDesigns2023}. 
This scaling extends both the initial zero plateau and the subsequent decaying region beyond $D=n/4$. 
Consequently, the quantum edge of chaos can be identified as the point at which TEE starts to deviate from $0$. 

\begin{figure*}[t!]
   \includegraphics[scale=1.0]{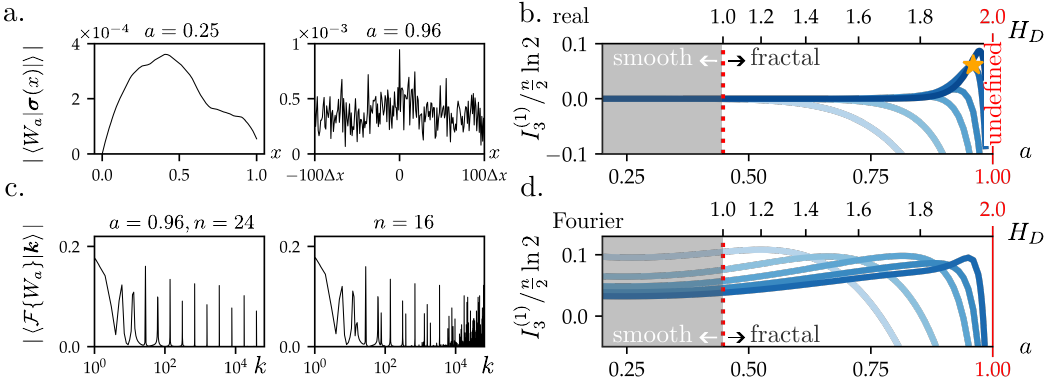}
   \caption{
      \textbf{Fractality vs quantum sparsity: the Weierstrass function.}
      {\bf a.}~The Weierstrass function in two different regimes: 
      (left) the smooth regime with $a=0.25<1/b$ and (right) the highly fractal regime with $a=0.96>1/b$ for $b=\sqrt{5}$
      at $n=24$ ($\Delta x = 2^{-24}$). 
      {\bf b.}~The TEE $I^{(1)}_3$ of the Weierstrass function across the transition from smooth ($a<1/b$) 
      to fractal ($ 1/b < a < 1$)
      for different system sizes $n=8,12,\ldots,28$ (light to dark) with $b=\sqrt{5}$. 
      The corresponding Hausdorff (fractal) dimension \cite{hunt_hausdorff_1998} 
      for a given $a$ is indicated on the top axis. 
      {\bf c.}~The absolute value of the amplitudes of $\ket{W_{a}}$ 
      for $a=0.96$ (indicated by the yellow star in a) at $n=24$  (top) and $n=16$ (bottom)
      in Fourier space, where $\ket{\mathcal{F}\{ W_{a} \}}=\mathcal{F}\ket{W_a}$ 
      obtained after applying QFT, $\mathcal{F}$, and  $\ket{\bm{k}}$ is the basis state in Fourier space (see Methods). 
      {\bf d.}~TEE of the Weierstrass function in Fourier space.
      The TEE is computed for $\ket{\mathcal{F}\{ W_{a} \}}$ for the same $a$ and $n$ as in panel b,
      with the exception of $n=28$ where the computational cost is prohibitive.
      \label{fig:weierstrass}} 
\end{figure*}

While Hartley entropy provides a coarse-grained view of the behavior of TEE with respect to the circuit depth, 
we now turn to $\alpha=1$ to examine the von Neumann entropy.
As the quantum counterpart of the classical Shannon entropy, $\TEE{1}$ provides a refined measure, 
allowing us to investigate the extent to which the quantum edge of chaos encompasses structured data 
that we may reasonably encounter in a computation.
To this end, we analyze the state $\ket{W_a}$ which amplitude-encodes the Weierstrass function (WF), 
$W_a(x)=\sum_{n=0}^{\infty} a^n \sin(b^n \pi x)$ ($0<b$) in the domain $0\leq x<1$. 
In this encoding, the domain is discretized into $2^n$ equidistant points $x_i = i \cdot 2^{-n}$, 
the function values at $x_i$, $W_a(x_i)$, are then assigned to the amplitude of the $i$\textsuperscript{th} 
computational basis state $\ket{\bm{\sigma}(x_i)}$ according to the bit string associated with the 
binary representation of $i$, up to a normalization factor. 
Here the computational basis state consists of local computational basis states $\ket{0}$ and $\ket{1}$, 
the $+1$ ($-1$) eigenstate of the Pauli-$Z$ operator; the $n$-fold tensor product of these naturally generates $2^n$ bit strings. 

The WF is a prototypical example of a function with tunable smoothness:
it is smooth  in the regime $0<a < 1/b$ with a bounded spectrum (\figref{fig:weierstrass}a., left),
while for $1/b \leq a < 1$, it develops a non-differentiable fractal character (\figref{fig:weierstrass}a., right).
Hence, the WF can serve as a test case in which the state encoding it interpolates continuously 
between smooth and irregular.
\figref{fig:weierstrass}b.~shows $\TEE{1}$ of $\ket{W_a}$ for varying system sizes. 
In the smooth and continuous regime, 
the TEE vanishes, as expected from its sparse, periodic, spatially correlated structure.
For $1/b<a$, however, a strong $n$ dependence of the TEE is observed.%

To connect this observation to the notion of quantum sparsity, we perform a quantum Fourier transform (QFT), $\mathcal{F}$,
to obtain $\ket{\mathcal{F}\{W_a\}}$. 
As indicated by its construction, the state is sparse in Fourier space 
as a function of integer wavenumber $k$, 
as shown in \figref{fig:weierstrass}c.~for $n=24$ (left) and $n=16$ (right).
However, for small $n$, the accumulation of amplitudes in large wavenumbers $k$ 
and the emergence of aliasing-induced peaks delocalize the state, shifting it toward a regime associated with negative TEE.
Furthermore, in \figref{fig:weierstrass}d., we plot the TEE of $\ket{\mathcal{F}\{W_a\}}$ 
for direct comparison with the real-space results shown in \figref{fig:weierstrass}b. 
We observe that the scrambling effect of an inverse QFT results in a minor suppression of TEE, 
while the overall behavior remains consistent with the predictions from our random state analysis. 
This correspondence is grounded in the formal relationship between TEE and classical sparsity: 
a random vector with sparsity $\mathcal{K}$ (defined as the number of nonzero-amplitude basis states) yields a non-negative TEE 
$\TEE{1}$ only if $\mathcal{K} < 2^{n/3}$ (see Methods for the derivation). 
This result is also consistent with Ref.~\cite{chenQuantumFourierTransform2023}, which indicates that 
the (inverse) QFT largely preserves the global entanglement structure of a quantum state.
Overall, these results establish TEE as a suitable indicator of quantum sparsity and a possible indirect measure of the underlying classical sparsity.

This behavior is formalized in the QNSST, which stipulates 
a threshold $q_c=\log_2(\pi/\lambda_{\min})$ 
for the required number of qubits to faithfully amplitude-encode a function. 
Here, $\lambda_{\min}$ is the wavelength corresponding to the highest-frequency Fourier component
of the function that is defined in the domain $x\in[0,1)$.
In the Methods section, we derive this bound from an exact representation of a state 
that amplitude-encodes $\sin(2 \pi x/\lambda)$. 
We show that the RDO of the $q$\textsuperscript{th} qubit, $\rho_q$, 
in the thermodynamic limit ($n \to \infty$) exponentially converges towards the Riemann interpolation of the function. 
Furthermore, unlike the classical sampling theorem, this threshold $q_c(\lambda)$ corresponds 
to a grid with resolution $\lambda / \pi$.
By linearity, this analysis holds for any function with a Fourier spectrum with compact support, 
and explains the vanishing TEE in the smooth region of \figref{fig:weierstrass}b. 

QNSST has profound implications for the quantum resources necessary to amplitude-encode a function onto a quantum state.
Here, we focus on the scaling of the required circuit depth for ansatz circuits that have the structure depicted in 
\figref{fig:overview}a. 
To encode a function $f(x)$ into a state $\ket{f(x)}$ 
with a well-defined Fourier spectrum characterized by a minimum wavelength $\lambda_{\min} $ 
and corresponding $q_c = \lceil \log_2 \pi/\lambda_{\min} \rceil$,
the required width $w$ of the circuit to reach the interpolation limit is of order $\mathcal{O}(q_c)$. 
For sufficiently well-behaved $\ket{f(x)}$, the depth $D$ of the circuit scales as (see Methods for details)
\begin{align}
   D=\mathcal{O}(q_c^2)=\mathcal{O}((\log_2 \pi/\lambda_{\min})^2), 
\end{align}
were the coefficients and the subleading terms depend on the specific choice of parameterization. 
Hence, the number of parameters required for the faithful representation of $f$ scales 
at worst as $\mathcal{O}(q_c^3)=\mathcal{O}((\log_2\frac{\pi}{\lambda_{\min}})^3)$, 
yielding the overall polylogarithmic scaling with $\lambda_{\min}$.

These results motivate us to apply TEE as a regularizer for quantum learning processes. 
For practical implementation, we restrict the order of \renyientropy{} to $\alpha=2$, 
as it can be %
measured experimentally on currently available quantum hardware 
\cite{alvesMultipartiteEntanglementDetection2004,
daleyMeasuringEntanglementGrowth2012,islamMeasuringEntanglementEntropy2015,
brydgesProbingRenyiEntanglement2019,hokeMeasurementinducedEntanglementTeleportation2023}.
We first investigate how the variance of the gradient of $\TEE{2}$ with respect to the parameter in the first layer 
depends on the total depth $D$. 
To do so, we restrict the structure to parameterized rotation gates acting on each qubit, 
where the rotation axis is drawn randomly from the Pauli operators and followed by a controlled-NOT gate 
(\figref{fig:optimization}a.). 
The main panel of \figref{fig:optimization}b.~shows the variances of the gradients of $\TEE{2}$, 
measured at the $\lfloor n/8 \rfloor$\textsuperscript{th} qubit at the depth $D=1$, 
$\theta_{1,\lfloor n/8 \rfloor} \in \bm{\theta}$.
The magnitude of the peaks %
is almost constant across the different system sizes, and their positions increase linearly with $n$. 
This suggests a cliff-like structure where the TEE, which is initially $\TEE{2}=0$, rapidly
decreases to the chaotic limit, analogous to our observations in \figref{fig:overview}c.~for $\alpha=0$. 

While the exponential gradient decay for $D \gg n$ might suggest a barren plateau, 
this is not an inherent limitation for local circuits.
Due to causality, penalizing entanglement at a small, fixed subregion size is sufficient to suppress entanglement 
at larger scales for the circuit discussed in this article, thereby maintaining trainable gradients. 
We propose a simple function evaluated on $\ket{\psi_D(\bm{\theta})}$ as a penalty term for VQA optimizations
\begin{align}
   \mathcal{C}_{\mathrm{TEE}}(\bm{\theta}) = \frac{1}{|\Omega|}\sum_{\{A,B,C\} \in \Omega} |\TEE{2}(A,B,C)_{\bm{\theta}}|,
   \label{eq:penalty}
\end{align}
where $\TEE{2}(A,B,C)_{\bm{\theta}}$ denotes the TEE of the parameterized output state,
and $\Omega$ is a set of triplets of disjoint, local subregions $\{A,B,C\}$.
This yields the regularized cost function
\begin{align}
   \mathcal{C}_{\mathrm{reg}}(\bm{\theta}) = \mathcal{C}(\bm{\theta}) + \gamma \mathcal{C}_{\mathrm{TEE}}(\bm{\theta}),
\end{align}
where $\mathcal{C}(\bm{\theta})$ is the bare cost function of the problem and $\gamma\geq0$ 
is a Lagrange multiplier. 

\begin{figure*}[t!]
   \includegraphics[scale=1.0]{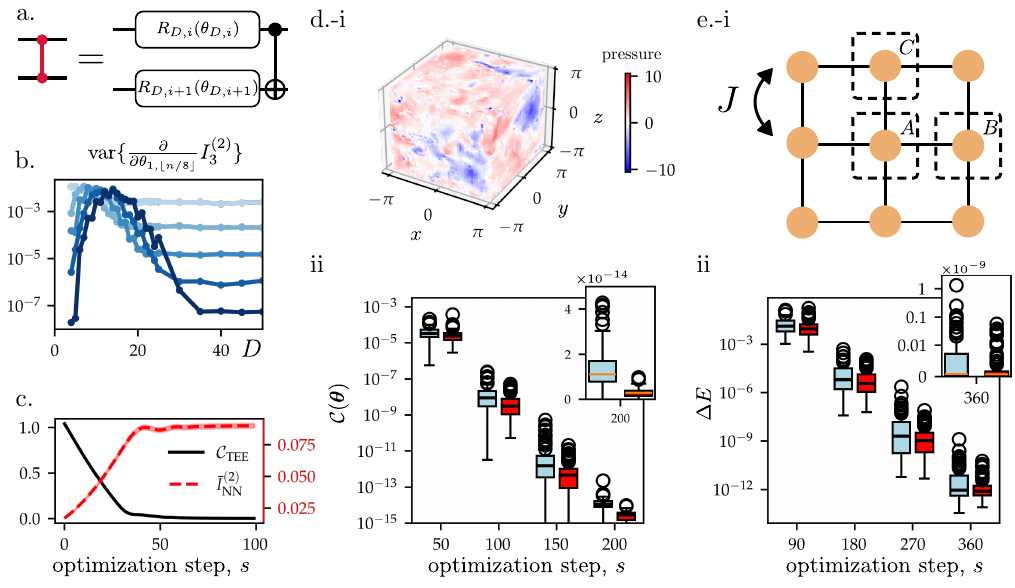}
   \caption{
      {\bf Regularized VQA optimization of brickwork ansatz circuits.}
      {\bf a.} The building blocks of the ansatz circuit at depth $D$ acting on qubits $i$ and $i+1$ 
      used for producing the results in this figure. 
      The circuit consists of parameterized rotation gates, denoted as $R_{D,i}(\theta_{D,i})$ and $R_{D,i+1}(\theta_{D,i+1})$, 
      which are applied to qubits $i$ and $i+1$ respectively, followed by a control-NOT gate, 
      to form a variational quantum state. Here $\theta_{D,i}$ controls the angle of rotation for the gate at depth $D$ 
      at qubit $i$. 
      For each rotation gate $R$, the axis of rotation is chosen randomly from the $x$, $y$, and $z$ axes of the Bloch sphere. 
      {\bf b.}~The gradient of $\TEE{2}$ 
      with respect to the $\lfloor n/8 \rfloor$-th
      parameter at $D=1$ is shown as a function of the total depth of the circuit $D_{\text{tot}}$
      for $n=8$, $12$, $16$, $20$, and $24$ (light to dark).
      The result is the average over 100 random initializations  
      (see Methods, error bars are too small to be visible). 
      {\bf c.}~The evolution of $\mathcal{C}_{\mathrm{TEE}}$ (black) and $\bar{I}^{(2)}_{\mathrm{NN}}$ (red) 
      over optimization steps $s$ for $n=9$ and depth $D=180$.
      The circuit parameters are randomly initialized, and the quantities are calculated on the 
      output state at depth $D$, $\ket{\psi_D(\bm{\theta})}$ (see Methods for details).
      {\bf d.}~Result for variational encoding of a turbulent flow field: 
      {\bf i.}~A size $256\times256\times256$ snapshot of the pressure field taken from the 
      isotropic turbulent flow dataset (see Methods). 
      It is used as a reference to create an $n=9$ qubit amplitude-encoded reference state. 
      {\bf ii.}~The distribution of the cost function at different optimization steps $s=50$, $100$, $150$, and $200$
      for $\gamma_0=0.1$ (red) and $\gamma_0=0$ (blue) with $\beta=0.5$ for $n=9$.
      Inset: the magnification of the cost function value at $s=200$. 
      {\bf e.}~Regularized ground state search in AF-2D-NNH:
      {\bf i.}~A sketch of the AF-2D-NNH model and the choice of L-shaped triplets $A$, $B$, and $C$, 
      for computing the TEE for the middle qubit. 
      {\bf ii.}~The distribution of the distance from the true solution $\Delta E$ at different optimization steps 
      $s=90$, $180$, $270$, and $360$ for $\gamma_0=100$ (red) and $\gamma_0=0$ (blue) with $\beta=0.5$ for $n=9$. 
      Inset: the magnified view of the distribution at $s=360$. 
      The y-axis scale is linear for $\mathcal{C}(\bm{\theta})\leq 10^{-11}$ and logarithmic for higher values.
      The results for d.~and e.~are obtained by sampling 100 realizations (See Methods for the details of optimizations). 
   \label{fig:optimization}}
\end{figure*}

We start by analyzing the effect of the regularizer on the entanglement structure of a randomly initialized 
ansatz circuit discussed previously. 
Thus, we perform the optimization solely with the regularizer $\mathcal{C} = \mathcal{C}_{\mathrm{TEE}}(\bm{\theta})$ for 
$n=9$, $D=180$, and with the randomly initialized $\bm{\theta}$ (see Methods). 
We then the set $\Omega=\Omega_2$, where $\Omega_2$ is a set of all triplets of contiguous local subregions 
in one dimension with periodic boundary conditions, each of size $n_s=2$.
For this choice of $\Omega$, 
the evaluation of the regularizer term $\mathcal{C}_{\mathrm{TEE}}$ adds a measurement cost of $\mathcal{O}(n_sn)$
for the Hong-Ou-Mandel-type purity-detection method
\cite{alvesMultipartiteEntanglementDetection2004,daleyMeasuringEntanglementGrowth2012,islamMeasuringEntanglementEntropy2015}.
This cost can be further reduced $\mathcal{O}(n)$ to with the randomized measurements
\cite{brydgesProbingRenyiEntanglement2019,hokeMeasurementinducedEntanglementTeleportation2023}. 
In \figref{fig:optimization}~b., we show the evolution of $\mathcal{C}_{\mathrm{TEE}}(\bm{\theta})$ (black)
and the average nearest-neighbor mutual information $\bar{I}^{(2)}_{\mathrm{NN}} = (1/n)\sum_i I^{(2)}(i,i+1)$ 
with periodic boundary condition (red). 
The results clearly show that entanglement between geometrically close qubits is enhanced concomitantly with the decay of TEE. 
The emergence of this structure suggests that the system operates near maximal computational capacity, 
consistent with the classical hypothesis of computing at the edge of chaos \cite{langtonComputationEdgeChaos1990,sole1996phase}.

To confirm this, we perform regularized VQA on two use cases.
We encode the scalar pressure field of a turbulent flow from Refs.~\cite{Perlman2007,Li2008,Yeung2015} 
shown in \figref{fig:optimization}d.~i.
We choose this example as it represents one of the most complex cases that may emerge as a VQA use case in 
the context of computational fluid dynamics or more generally, for solving nonlinear PDEs
\cite{jakschVariationalQuantumAlgorithms2023,bengoecheaVariationalQuantumAlgorithms2024}. 
An approximate representation of this state is generated by optimizing the fidelity between the exactly encoded state
and the variational state $\ket{\psi_D(\bm{\theta})}$ by setting 
$\mathcal{C}(\bm{\theta}) = 1 - |\braket{\psi( \bm{\theta})|\Pi}|^2$, where $\ket{\Pi}$ is a reference 
state amplitude-encoding the scalar pressure field of an isotropic turbulent flow (see Methods).
We perform optimization with the same $\Omega=\Omega_2$. 
Naturally, the regularized cost function $\mathcal{C}_{\mathrm{reg}}(\bm{\theta})$ 
converges only to the true minimum of the bare cost function 
$\mathcal{C} (\bm{\theta})$, if the TEE of this minimum vanishes exactly.
However, based on our previous discussion, we must not assume that this condition is fulfilled a priori. 
Therefore, for all training discussed subsequently, we reduce the penalty term exponentially as the optimization step 
proceeds (See Methods).
This strategy of using a step-dependent regularizer is analogous 
to established continuation techniques in $\ell_1$-regularized minimization \cite{hale2007fixed}. 
It allows the VQA to converge to the minimum of $\mathcal{C} (\bm{\theta})$ after the constrained optimization has generated 
a sparse approximate solution near the true minimum, while avoiding barren plateaus. 
In \figref{fig:optimization}d.~ii, we present the distribution of the cost function over different trajectories 
for optimization results with regularization (red) and without regularization (blue).
We observe that in the later stages of the optimization, the regularized optimization converges 
to the global minimum with high fidelity.
In contrast, the optimization without regularization shows slower convergence with 
outliers deviating by one to two orders of magnitude from the worst-case trajectory observed for the regularized optimization.

A similar trend also emerges in the ground state search for the antiferromagnetic, two-dimensional, nearest-neighbor Heisenberg 
(AF-2D-NNH) model (\figref{fig:optimization}e.~i) on the square lattice, 
a prototypical model for quantum magnetism, with the Hamiltonian
$H=J\sum_{<i,j>}(\sigma^{x}_i\sigma^{x}_{j}+\sigma^{y}_i\sigma^{y}_{j}+\sigma^{z}_i\sigma^{z}_{j}) + h\sum_i \sigma^{z}_i$,
where $<\cdot,\cdot>$ denotes the nearest-neighbor sites and $\{\sigma^{x}_i,\sigma^{y}_i,\sigma^{z}_i\}$ 
are Pauli operators acting on qubit $i$. 
We solve for the ground state by 
optimizing the cost function $\mathcal{C}(\bm{\theta})=\braket{\psi_D(\bm{\theta})|H|\psi_D(\bm{\theta})}$,
for the model with parameters $J=1$ and $h/J=2$, where 
the ground state features long-range N\'eel order with a nontrivial entanglement structure
\cite{giamarchi2003quantum,songEntanglementEntropyTwodimensional2011}. 
For this optimization, we choose $\Omega$ as the set of all L-shaped triplets of nearest-neighbor sites,
as illustrated in \figref{fig:optimization}e.~i.
Hence, the constraint term not only mitigates barren plateaus
but also imposes a geometric constraint on the optimization of a problem with a geometrically incommensurate ansatz circuit. 
Successfully preparing such a state with a VQA therefore demonstrates the robustness of our approach 
in tackling problems defined by complex, geometrically nonlocal quantum correlations. 
This makes it an ideal testbed for validating the efficacy of our new optimization method against 
a physically relevant and computationally hard problem.
We perform the VQE with the same step-dependent regularizer strength (see Methods).
In \figref{fig:optimization}e~ii., we plot the deviation from the true ground state energy 
$\Delta E = \mathcal{C}(\bm{\theta}) - E_g $, where $E_g$ is the true ground state energy of the model. 
We observe again faster and more controlled convergence for optimization with regularization
(red) in comparison to the case without (blue).
A suppression of outliers by one to two orders of magnitude is also observed.

These drastic improvements shown in these two examples from two widely separated fields of physics show that the penalty term 
(\eqnref{eq:penalty}) effectively reshapes the optimization landscape. 
This behavior supports our hypothesis that the true optimum lies within the \textit{uantum edge}, a critical regime, 
bordering between a fertile valley and a barren plateau. 
Furthermore, the theoretical, analytical, and benchmark results discussed in this article 
shed light on a potential unifying explanation for various existing barren plateau mitigation techniques,
ranging from specialized initializations 
\cite{grantInitializationStrategyAddressing2019,mhiriUnifyingAccountWarm2025}
and localization \cite{parkHardwareefficientAnsatzBarren2024}
to local optimization \cite{huangGenerativeQuantumAdvantage2025}. 
Our results suggest that these diverse methods converge on a common principle discussed in this article: 
avoiding barren plateaus by creating highly expressive, shallow-circuit-equivalent structures and mitigating
scrambling by restricting lightcone spreading.

Moving forward, the principle of quantum sparsity and the associated QNSST offer a rigorous framework 
for circuit architecture design. 
These principles likely extend beyond VQA to any quantum protocol 
where structured information must be protected from uncontrolled scrambling. 
This work establishes a link between quantum information geometry and computational tractability, 
providing a roadmap for achieving practical quantum advantage in the noisy intermediate-scale quantum era and beyond.

\bibliography{bibs.bib}

\clearpage
\section{Methods}

\pheader{TEE of a random $\mathcal{K}$-sparse state}
Firstly, we show that the threshold classical sparsity $\mathcal{K}$ at which $\TEE{3}$ becomes negative is $2^{n/3}$
for an $n$-qubit random sparse state. 
Here, $TEE{\alpha}$ denotes the $\TEE{\alpha}(A,B,C)$, formulated with the von Neumann entropy,
where the regions $A$, $B$, and $C$ are chosen as contiguous regions of equal size $|A|=|B|=|C|=n/4$, consistent with the main text.
Adopting definitions from compressed sensing \cite{abo-zahhadCompressiveSensingAlgorithms2015}, 
we define the classical sparsity $\mathcal{K}$ of a quantum state as the number of non-zero amplitude basis states 
in a given representation, as briefly mentioned in the main text.
We then construct a random $\mathcal{K}$-sparse pure state $\ket{\mathcal{K}}$ as
\begin{align}
\ket{\mathcal{K}} = \sum_{i\in \mathcal{S}} c_i \ket{i}
\end{align}
where $\mathcal{S} \subset \{0, 1, \dots, 2^n - 1\}$ is a random subset of basis state indices with cardinality 
$|\mathcal{S}|=\mathcal{K}$. 
Each index $i$ corresponds to a computational basis state $\ket{i} \equiv \ket{b_{n-1} \dots b_0}$, 
where the bit string $b_{n-1} \dots b_0$ represents the binary expansion of the integer $i$. 
The coefficients $c_i$ are random complex numbers with uniform magnitude, $|c_i| = 1/\sqrt{\mathcal{K}}$, 
to ensure the normalization $\braket{\mathcal{K}|\mathcal{K}} = 1$, and differ only by their stochastic phases. 
Thus, $\ket{\mathcal{K}}$ effectively represents the amplitude encoding of a classical array with sparsity $\mathcal{K}$
and elements of uniform magnitude. 

When tracing out the environment of a given region, $X\in\{A,B,C\}$, 
the contributions of the cross-terms of the form $c_ic_j^*\braket{i|j}$ ($i\neq j$) 
vanish exponentially with the system size, scaling as $1/2^{n-|X|}$. 
Therefore, the reduced density operator of $X$ is dominated by the diagonal components 
while off-diagonal terms vanish exponentially with $n$. 

We ignore these vanishing terms in this analysis, and assume $2^{n/4} \ll \mathcal{K}$ and $\mathcal{K} \ll 2^{n/2}$.
Under these assumptions, we obtain the von Neumann entropy of $X$ as $S^{(1)}_X \approx |X|\ln 2$
where $|X| \ll \log_2 \mathcal{K}$, for subregion $X \in \{A,B,C\}$. 
For a joint subpartition $XY=X\cup Y$ ($Y \in \{A,B,C \}\backslash \{ X \}$, 
the entropy $S^{(1)}_{XY}$ generally approaches $\ln \mathcal{K}$.
Thus, the $\TEE{1}$ is
\begin{align}
   \TEE{1}(0,n/4) \approx  n\ln 2 - 3\ln \mathcal{K}
\end{align}
Consequently, the condition on $\kappa$ for $0<\TEE{1}$ is 
\begin{align}
   \mathcal{K} < 2^{\frac{1}{3}n}. \label{eqn:boundkappa}
\end{align}
Thus, a positive TEE is a signature of an underlying sparse structure in a quantum state.
In \appref{app:numksparseTEE} we provide numerical results for a random state with sparsity $\mathcal{K}$, and
equal amplitudes; those results demonstrate a clear transition in TEE from positive to negative near $\mathcal{K} = 2^{n/3}$,
consistent with the theoretical expectations. 
\\

\pheader{Exact amplitude encoding of a sine function}
We provide an exact representation of a single-qubit reduced density operator of the state $\ket{\psi_{\lambda}^n}
\propto\sum_{i}\sin(\frac{2\pi}{\lambda} x_i)\ket{\bm{\sigma}(x_i)} $
defined over the domain $0\leq x < 1$. 
Here, $x_i$ are the equidistant grid points in the domain and $\lambda$
is the wavelength, as discussed in the main text.

In the form of a Matrix Product State (MPS) \cite{oseledetsConstructiveRepresentationFunctions2013}, 
$\ket{\psi_{\lambda}^n}$ is written as
\begin{align}
   \ket{\psi_{\lambda}^n} = \sum_{\bm{\sigma}}
   \sum_{\bm{\kappa}} M^{(\sigma_0)}_{\kappa_0}
   M^{(\sigma_1)}_{\kappa_0,\kappa_1}
   \ldots
   M^{(\sigma_{n-1})}_{\kappa_{n-2}}\ket{\bm{\sigma}(x_i)},
\end{align}
where $M^{\sigma_k}_{i,j}$ denotes a rank-three tensor and summation is taken over all the possible
computational basis states $\bm{\sigma}=\{ \bm{\sigma}(x_0),\bm{\sigma}(x_1),\cdots,\bm{\sigma}(x_{2^{n-1}})\}$ 
and bond indices $\bm{\kappa} = \{ \kappa_0,\kappa_1,\cdots\kappa_{n-1}\}$.
Consistent with the main text, each basis state $\ket{\bm{\sigma}} = \bigotimes_{k=0}^{n-1} \ket{\sigma_k}$ 
is identified as the $i$\textsuperscript{th} computational basis state $\ket{\bm{\sigma}(x_i)}$, 
where the bit string $(\sigma_0, \dots, \sigma_{n-1})$ is the binary representation of the integer $i$ 
that defines the grid point $x_i = i \cdot 2^{-n}$.

The tensors $M$ are explicitly given by
\begin{align}
   M_{\kappa_0}^{(0)} (\tau) &= \begin{pmatrix} \sin\left(\frac{k}{2}+\tau\right) & \cos\left(\frac{k}{2}+\tau\right) \end{pmatrix}, 
   M_{\kappa_0}^{(1)} (\tau) = \begin{pmatrix} \cos(\tau) & \sin(\tau) \end{pmatrix} \nonumber\\
   M_{\kappa_{q-1},\kappa_{q}}^{(0)} (\tau) &= 
   \begin{pmatrix}
       \cos\left(\frac{k}{2^{q+1}}+\tau\right) & \sin\left(\frac{k}{2^{q+1}}+\tau\right) \\
      -\sin\left(\frac{k}{2^{q+1}}+\tau\right) & \cos\left(\frac{k}{2^{q+1}}+\tau\right)
   \end{pmatrix},
   M_{\kappa_{q-1},\kappa_{q}}^{(1)}  (\tau)= 
   \begin{pmatrix}
      \cos(\tau) & \sin(\tau) \\
      -\sin(\tau) & \cos(\tau)
   \end{pmatrix}\nonumber\\
   M_{\kappa_{n-2}}^{(0)}  (\tau)&= 
   \begin{pmatrix} 
      \cos\left(\frac{k}{2^n}+\tau\right) \\ \sin\left(\frac{k}{2^n}+\tau\right) 
   \end{pmatrix},
   M_{\kappa_{n-2}}^{(1)}  (\tau)= \begin{pmatrix} \sin(\tau) \\ \cos(\tau) \end{pmatrix}. 
\end{align}
where $k=2\pi/\lambda$ is the wavenumber and $q \in \{1, \dots, n-2\}$ denotes the site index. 
To encode a sine function with a non-zero $\phi$, we set the tensor parameter of one chosen site to
$\tau=\phi$ while setting  $\tau=0$ otherwise. 
Unless otherwise explicitly stated, we set $\tau=0$ for the remainder of this section.

In the computational basis $\{ \ket{0}, \ket{1}\} $, the reduced density operator (RDO) of qubit $q$, obtained by tracing out all other qubits $\bar{q} \neq q$, is given by 
$\rho_q=\mathrm{Tr}_{\bar{q}} \ket{\Psi_{\lambda}^n}\bra{\Psi_{\lambda}^n}$.
This operator is obtained by contracting
the MPS tensors over all physical indices $\sigma_k$, except for $\sigma_q$, as well as all bond indices. 
In the limit of $n\to\infty$, we obtain the following expression
\begin{align}
   \rho_q^{\infty} &= \lim_{n\to\infty} \sum_{\bm{\kappa},\bm{\kappa'},\bm{\sigma}\neq \sigma_q} 
   M^{(\sigma_0)}_{\kappa_0}\cdots M^{(\sigma_{q})}_{\kappa_{q-1},\kappa_{q}}(\phi) \cdots M^{(\sigma_{n-1})}_{\kappa_{n-2}}
   (M^{(\sigma_0)}_{\kappa_0'})^*\cdots(M^{(\sigma_{q})}_{\kappa_{q-1}',\kappa_{q}'}(\phi) )^* \cdots (M^{(\sigma_{n-1})}_{\kappa_{n-2}'})^*  \nonumber\\
          &= 
   \begin{pmatrix}
      \frac{2k-\sec(2^{-q-1} k)
      \left(\sin((2+2^{-q-1})k+2\phi) - \sin(2^{-q-1}k + 2\phi)\right)}{4 k - 2\sin(2(k+\phi)) + 2\sin(2\phi)} & 
      \frac{2k\cos(2^{-q-1} k) - \sec(2^{-q-1}k)\left(\sin(2(k+\phi)) - \sin(2 \phi)\right)}
      {4 k - 2\sin(2(k+\phi)) + 2\sin(2\phi)} \\
      \frac{2k\cos(2^{-q-1} k) - \sec(2^{-q-1}k)\left(\sin(2(k+\phi)) - \sin(2 \phi)\right)}
      {4 k - 2\sin(2(k+\phi)) + 2\sin(2\phi)} &
      \frac{2k-\sec(2^{-q-1} k)
      \left(\sin((2-2^{-q-1})k+2\phi) - \sin(2\phi-2^{-q-1}k)\right)}{4 k - 2\sin(2(k+\phi)) + 2\sin(2\phi)}
   \end{pmatrix},
   \label{eq:singleqrdmat}
\end{align}
where $M^{*}$ denotes the elementwise complex conjugation of $M$. Derivation is provided in \appref{app:derivrhoinf}.

Based on the expression, for $2^{-q-1}k < 1$, 
the reduced density operator converges exponentially towards that of the state $\ket{+}\bra{+}$ as $q$ increases.
By rearranging the inequality $2^{-q-1}k < 1$, we obtain the threshold number of qubits for resolution, 
$q_c(\lambda)=\log_2 \pi/\lambda$, as stated in the main text.

Furthermore, in the linear interpolation limit, an ($n+1$)-qubit state is generated by a linear interpolation of an $n$-qubit state as
\begin{align}
\ket{\Psi_{\lambda}^{n+1}} \propto \left(\mathbb{I}^{\otimes n+1}+\frac{\mathbb{I}^{\otimes n}+\mathbb{A}}{2}\otimes\sigma^{x}_{n+1}\right) 
   \ket{\Psi_{\lambda}^{n}}\ket{0}_{n+1},
\end{align}
where operator $\mathbb{A}$ is a unitary operator that 
shifts a basis state by one grid size, $\mathbb{A} \ket{\bm{\sigma}(x_i)}=\ket{\bm{\sigma}(x_i-2^{-n})}$ 
((assuming periodic boundary conditions) and $\mathbb{I}$ is the identity acting on the local one-qubit Hilbert space.

Assuming that $\ket{\Psi_{\lambda}^{n}}$ is exact, the reduced density operator of the ($n+1$)\textsuperscript{th}-qubit,
$\rho_{n+1}$, is calculated as
\begin{align}
   \rho_{n+1} &= \frac{1}{2}
   \begin{pmatrix}
      1 & \frac{1}{2} + \frac{1}{2\mathcal{N}^2}\sum_{i=0}^{2^{n}-1} 
      \sin(\frac{2\pi}{\lambda} x_i)\sin(\frac{2\pi}{\lambda}(x_i+2^{-n})) \\ 
      \frac{1}{2} + \frac{1}{2\mathcal{N}^2} \sum_{i=0}^{2^n-1} \sin(\frac{2\pi}{\lambda}x_i)\sin(\frac{2\pi}{\lambda}(x_i+2^{-n})) &
      \frac{1}{2} + \frac{1}{2\mathcal{N}^2} \sum_{i=0}^{2^n-1} \sin(\frac{2\pi}{\lambda}x_i)\sin(\frac{2\pi}{\lambda}(x_i+2^{-n})) 
   \end{pmatrix}
   \nonumber\\
   &\approx
   \frac{1}{2} 
   \begin{pmatrix}
   1 & 1 + \mathcal{O}(2^{-n}) \\ 
   1 + \mathcal{O}(2^{-n}) &1 + \mathcal{O}(2^{-n})
\end{pmatrix} &(n_c \ll n),
\end{align}
where $\mathcal{N}=\sqrt{\sum_{i=0}^{2^n-1} \sin^2(\frac{2\pi}{\lambda} x_i)} = \mathcal{O}(2^{n/2})$
is the normalization constant, and the approximation in the second line holds for $\frac{2\pi}{\lambda}2^{-n}= 2^{1+n_c-n}\ll 1$. 
Clearly, the expression converges exponentially with $n$ towards the state $\ket{+}\bra{+}$, 
consistent with \eqnref{eq:singleqrdmat}.
Finally, in \appref{app:QNSSTmultiplefreq} 
we show that when the state has a continuous spectrum with band limit of $\lambda_{\min}$, 
the threshold qubit remains at $q_c(\lambda_{\min})$. 
\\

\pheader{Circuit depth scaling for brickwork ansatz circuit}
We analyze the circuit depth $D_n$ required to prepare an $n$-qubit state $\ket{f_{0,L}}$, 
which represents the amplitude-encoding of a function $f(x)$ defined on the domain $0 \leq x < L$. 
To achieve this, we employ recursive construction of the state. 
The target state can be expressed as a superposition of functions over its sub-domains as
$$\ket{f_{0,L}} = \mathcal{C}_{0,L/2}\ket{0}\ket{f_{0,L/2}} + \mathcal{C}_{L/2,L}\ket{1}\ket{f_{L/2,L}},$$
where $\ket{f_{0,L/2}}$ and $\ket{f_{L/2,L}}$ are $(n-1)$-qubit states on the former and latter half of the domain respectively.

The proposed construction method defines the VQA problem as finding a single ($n-1$)-qubit unitary 
$U(\bm{\theta})$ that maps two product states to two functions, 
such that $\ket{f_{0,L/2}} \approx U(\bm{\theta})\ket{0}^{\otimes n-1}$ and 
$\ket{g_{L/2,L}} \approx U(\bm{\theta})\ket{r}^{\otimes n-1}$.
Here, $\ket{g_{L/2,L}} \propto \ket{f_{L/2,L}} - \braket{f_{0,L/2}|f_{L/2,L}}\ket{f_{0,L/2}}$ 
is the normalized orthogonal component of the second function, and $\ket{r}$ is a single qubit state that ensures 
$\braket{f_{0,L/2}|g_{L/2,L}}=\braket{0|r}^{n-1}$. 

The full state, $\ket{f_{0,L}}$, is constructed by
first preparing an intermediate Greenberger-Horne-Zeilinger-like (GHZ-like) state of the form
$$\ket{\phi_{\mathrm{GHZ}}} = \mathcal{C}_{\ket{00}}\ket{0}\ket{0}^{\otimes (n-1)} 
+ \mathcal{C}_{\ket{1r}}\ket{1}\ket{r}^{\otimes n-1}$$
from the $\ket{0}^{\otimes n}$ state with the appropriate coefficients $\mathcal{C}_{\ket{00}}$ and $\mathcal{C}_{\ket{1r}}$. 
Subsequently, the VQA unitary $\mathbb{I} \otimes U(\bm{\theta})$ is applied to $\ket{\phi_{\mathrm{GHZ}}}$. 

The total circuit depth $D_n$ is derived from the combined cost of these steps:
First, the depth to create the state $\ket{\phi_{\mathrm{GHZ}}}$ using a brickwork ansatz circuit (as depicted in 
\figref{fig:overview}a.), which is of order $\mathcal{O}(n)$ \cite{baumerEfficientLongRangeEntanglement2024}, 
and second, the circuit depth contribution from $U(\bm{\theta})$. 
Now we assume that $n$ is sufficiently large ($q_c(\lambda_{\min}) < n$) 
such that $\ket{f_{0,L}}$ with a well defined smallest wavelength $\lambda_{\min}$ is well resolved,
the number of points of discontinuity in the function within the domain is independent of $n$,
and the entanglement structure and the underlying Fourier spectrum of the states in the subdomains 
($\ket{f_{0.L/2}}$ and $\ket{f_{L/2,L}}$) are similar.
With these assumptions, the depth of the ($n-1$)-qubit VQA circuit $U(\bm{\theta})$ 
is equivalent to the depth $D_{n-1}$ required to map one basis state to the target state. 
Summing these costs establishes the recursion relation and yields the derived scaling 
\begin{equation}
   D_{n} = \mathcal{O}(n) + D_{n-1} =\mathcal{O}(n^2).
\end{equation} 
Hence, the number of parameters required to encode $f(x)$ scales as $\mathcal{O}(n^3)$. 
Consequently, the circuit error also scales with the total number of gates,
which is equivalent to the total number of parameters in the case of a brickwork ansatz circuit,
namely $\mathcal{O}(n D_n) = \mathcal{O}(n^3)$ (\appref{app:errorscaling}).
Furthermore, to confirm the inductive relation and the derived scaling, we refer to \appref{app:weierstrassencoding}, 
where we provide the scaling of the minimum number of parameters that are required to achieve an infidelity 
 below a threshold $\epsilon_{\mathrm{th}}$ for the Weierstrass function $W_a(x)$.
We find the scaling to be consistent with or below $\mathcal{O}(n^3)$. 
\\

\pheader{Regularized optimization}
Finally, we explain the details of the Variational Quantum Algorithms (VQAs) investigated in the main text.
We analyze three distinct cases: optimization in the limit where the regularizer dominates, 
the preparation of a turbulent data-encoded state, and a variational quantum eigensolver (VQE) for 
finding the 
ground state of the 
antiferromagnetic, two-dimensional, nearest-neighbor Heisenberg
AF-2D-NNH) model.

To implement these tasks, we utilize a common variational framework based on a brickwork ansatz circuit, 
as illustrated in the circuit diagram of \figref{fig:overview}a.
In this architecture, $n$ qubits (represented 
by black horizontal lines) are evolved from an initial state $\ket{\psi_0}$ through a sequence of local entangling layers. 
Each entangling operation (red bars in \figref{fig:overview}a) corresponds to a block 
acting on a pair of nearest-neighbor qubits $(i, j)$ specific to that layer. Each block consists of two 
single-qubit parameterized rotation gates $R_{d,i}(\theta_{d,i})$ and $R_{d,j}(\theta_{d,j})$ acting on 
qubits $i$ and $j$ respectively, followed by a control-NOT ($\mathrm{CNOT}$) gate. The rotation gates 
are defined as $R_{d,i}(\theta_{d,i}) = \exp(-\mathrm{i}\frac{\theta_{d,i}}{2} \sigma^{\kappa_{d,i}}_i)$, 
where the rotation axis $\kappa_{d,i} \in \{x, y, z\}$ is chosen randomly at initialization for each 
qubit and depth. The brickwork ansatz circuit produces an output state $\ket{\psi_{\Dtot{}}(\bm{\theta})}=U(\bm{\theta})\ket{\psi_0}$, 
where the parameterized unitary $U(\bm{\theta}) = \prod_{d=1}^{\Dtot{}} U_d(\bm{\theta})$ is composed of total $\Dtot{}$ layers. 
To facilitate global information spreading, the layer structure alternates between 
acting on qubit pairs $(2i+1, 2i+2)$ for odd-numbered layers ($d=1, 3, \dots$) and $(2i, 2i+1)$ 
for even-numbered layers ($d=2, 4, \dots$), with periodic boundary conditions imposed by identifying 
the $(n+i)$\textsuperscript{th} qubit with the $i$\textsuperscript{th} qubit. Gradients of the TEE, 
formulated using the second-order \renyientropy{} $S^{(2)}_A = -\ln \Tr\{\rho_A^2\}$, 
are evaluated via the parameter-shift rule \cite{mitaraiQuantumCircuitLearning2018,vidalCalculusParameterizedQuantum2018}: 
$\frac{\partial}{\partial \theta_{d,i}}S^{(2)}_A = -\Tr\{(\rho_A^{+}-\rho_A^{-})\rho_A\} / \Tr\{\rho_A^2\}$, 
where $\rho^{\pm}_A$ are obtained from states generated by shifting the parameter $\theta_{d,i}$ by $\pm \pi/2$.

For the optimization in the limit where the regularizer dominates (\figref{fig:optimization}b.~and c.), 
we employ a circuit with $n=9$ qubits 
and a total depth of $\Dtot{}=180$. The cost function is defined solely by the TEE regularizer 
$\mathcal{C}(\bm{\theta}) = \mathcal{C}_{\mathrm{TEE}}(\bm{\theta})$, using a triplet set $\Omega = \Omega_2$ 
consisting of all triplets of contiguous local subregions, each of size 2 in one dimension under periodic boundary conditions. 
The initial state is 
$\ket{\sqrt[4]{Y}} = (\sqrt[4]{\sigma^{y}}\ket{0})^{\otimes n}$. 
The circuit parameters $\bm{\theta}$ are initialized randomly, and optimization is performed for 200 steps 
using the \texttt{ADAMW} optimizer from \texttt{TensorFlow} with default parameters.

In the preparation of the turbulent data-encoded state (\figref{fig:optimization}d.), we optimize the infidelity cost function 
$\mathcal{C}(\bm{\theta}) = 1 - |\braket{\psi( \bm{\theta})|\Pi}|^2$, where $\ket{\Pi}$ is a reference 
state amplitude-encoding the scalar pressure field of an isotropic turbulent flow. 
The ground-truth pressure field is obtained from the Johns Hopkins University isotropic turbulent flow dataset, \texttt{isotropic8192}, at snapshot time 1 \cite{Perlman2007,Li2008,Yeung2015}. 
The reference state $\ket{\Pi}$ is created by sampling the $256\times256\times256$ snapshot with an interval $\Delta x = 2\pi/2^n$ for $n=9$.
The domain is discretized 
into $2^n$ points for $n=9$ qubits with a total depth of $\Dtot=180$ layers. We employ the 
triplet set $\Omega = \Omega_2$ and a step-dependent regularizer $\gamma(s) = \gamma_0\beta^s$ with 
$\gamma_0=0.1$ and $\beta=0.5$, where $s$ is the optimization step count. The optimization is conducted 
for 200 steps using the Conjugate Gradient (\texttt{CG}) implementation in \texttt{SciPy} \cite{2020SciPy-NMeth}
to ensure high-precision convergence in the final stage, 
starting from the initial state $\ket{\psi_0}=\ket{\sqrt[4]{Y}}$ with the first 6 layers initialized with random parameters.

For the VQE task (\figref{fig:optimization}e.), 
we seek the ground state of the AF-2D-NNH model on a $3\times3$ square lattice ($n=9$) 
with the Hamiltonian $H=J\sum_{<i,j>}(\bm{\sigma}_i \cdot \bm{\sigma}_j) + h\sum_i \sigma^{z}_i$ (for $J=1, h/J=2$).
The cost function is the energy expectation value $\mathcal{C}(\bm{\theta})=\braket{\psi(\bm{\theta})|H|\psi(\bm{\theta})}$, 
regularized with $\gamma_0=100$ and $\beta=0.5$. We choose $\Omega$ as the set of all L-shaped
triplets of nearest-neighbor sites to reflect the two-dimensional connectivity of the lattice (\figref{fig:optimization}e.~i). 
The optimization is performed for 360 steps using the \texttt{CG} optimizer, utilizing $D_{\text{tot}}=180$ layers 
and starting from the initial state $\ket{\sqrt[4]{Y}}$ with the first 6 layers initialized with random parameters.
\\

\pheader{Data availability}
The dataset containing the source data for this manuscript is available at \cite{hashizumedataset2026}.
All other data that support the plots within this paper and other findings of this study are available from the corresponding author upon reasonable request.

\section{Acknowledgements}
DJ, TH, and FS are partially funded by the Cluster of Excellence `Advanced Imaging of Matter' of the Deutsche Forschungsgemeinschaft (DFG)|EXC 2056- project ID390715994.
DJ and TH acknowledge support by the European Union’s Horizon Programme (HORIZON-CL42021DIGITALEMERGING-02-10) Grant Agreement 101080085 QCFD.
DJ acknowledges DFG project „Quantencomputing mit neutralen Atomen“ (JA 1793/1-1, Japan-JST-DFG-ASPIRE 2024), 
and the Hamburg Quantum Computing Initiative (HQIC) project EFRE. 
The EFRE project is co-financed by the ERDF of the European Union and by the
``Fonds of the Hamburg Ministry of Science, Research, Equalities and Districts (BWFGB)''.
Results were obtained using the PHYSnet computational cluster based at University of Hamburg 
and the HPC-cluster Hummel-2 at University of Hamburg. 
FS acknowledges support from the research unit `FOR5750: OPTIMAL' - project ID 531215165. 
The Hummel cluster was funded by Deutsche Forschungsgemeinschaft (DFG, German Research Foundation) – 498394658.
ZW acknowledges the support of the Swedish Research Council VR (Grant 2022–06176).
TH acknowledges Martin Stieben for support in obtaining the numerical results and Gregory S.~Bentsen for helpful discussions. 

\appendix 
\clearpage
\section{Numerical result for random $\mathcal{K}$-sparse state}
\label{app:numksparseTEE}
In this appendix, we show the numerical result for the TEE of $\mathcal{K}$-sparse quantum states $\ket{\mathcal{K}}$. 
As stated in the Methods, we construct a random $\mathcal{K}$-sparse pure state $\ket{\mathcal{K}}$ as
\begin{align}
\ket{\mathcal{K}} = \sum_{i\in \mathcal{S}} c_i \ket{i}
\end{align}
where $\mathcal{S} \subset \{0, 1, \dots, 2^n - 1\}$ is a random subset of $\mathcal{K}$ unique integers with cardinality 
$|\mathcal{S}|=\mathcal{K}$. 
Each index $i \in \mathcal{S}$ have a unique computational basis state $\ket{i} \equiv \ket{b_{n-1} \dots b_0}$, 
where the bit string $b_{n-1} \dots b_0$ represents the $n$-bit binary expansion of the integer $i$ for $n$-qubit system. 
The coefficients $c_i$ are random complex numbers with uniform magnitude, $|c_i| = 1/\sqrt{\mathcal{K}}$, 
to ensure the normalization $\braket{\mathcal{K}|\mathcal{K}} = 1$, and differ only by their stochastic phases. 
Thus, $\ket{\mathcal{K}}$ effectively represents the amplitude encoding of the classical array with sparsity $\mathcal{K}$
and elements of uniform magnitude.
Shown in \figref{fig:randomTEE} is the TEE $I^{3}(A,B,C)$, where partitions $A$, $B$, and $C$ are taken as a contiguous subregion of size $n/4$ in a 1D array of qubits (inset). 
As expected from the theory, transition from positive to negative occurs near $\mathcal{K}=2^{n/3}$. 
\begin{figure}[t!]
   \includegraphics[scale=1.0]{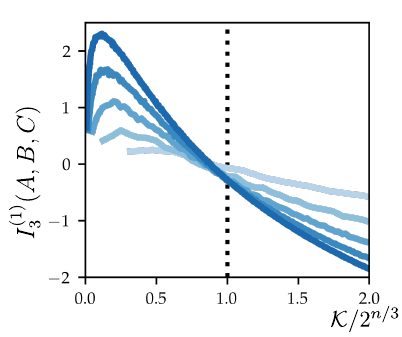}
   \caption{
      {\bf 
      TEE of random state with magnitude unity amplitude}
      TEE $I^{(1)}_3$ of random states with as a function of sparsity $\mathcal{K}$.
      Results are shown for various system sizes $n=8$, $12$, $16$, $20$, and $24$ (light to dark).
      For each $\mathcal{K}$, $\TEE{3}$ is evaluated by averaging over $100$ random states. 
      Error bars are included but may be imperceptible due to their small magnitude.
      \label{fig:randomTEE}
   }
\end{figure}

\section{Derivation of the exact expression for the density matrix of sine MPS}
\label{app:derivrhoinf}
In this appendix, we derive \eqnref{eq:singleqrdmat} of the Methods, which is the exact expression for the 
reduced density matrix of the $q$\textsuperscript{th} qubit in the thermodynamic limit
for a state in which a sine function of wavelength $\lambda$ is encoded on the domain $0\leq x<1$. 

\subsection{Amplitude encoded sine function in MPS form}
As defined in the main article, we define an amplitude encoded single sine function in domain $0\leq x <1$
\begin{align}
   \ket{\psi_{\lambda}^n} \propto\sum_{i}\sin\left(\frac{2\pi}{\lambda} x_i + \phi\right)\ket{\bm{\sigma}(x_i)},
\end{align}
where $\lambda$ is a wavelength and $\phi$ is the phase and $x_i$ are the equally spaced, enumerated, grid points over the domain. 
Here, $\ket{\bm{\sigma}_{x_i}}$ is a basis state corresponds to the $i$\textsuperscript{th}  grid point $x_i$, 
obtained from the binary expansion of $i$. 

As a Matrix Product States \cite{oseledetsConstructiveRepresentationFunctions2013},
$\ket{\psi_{\lambda}^n}$ is written as
\begin{align}
   \ket{\psi_{\lambda}^n} = \sum_{\bm{\sigma}}
   \sum_{\bm{\kappa}} M^{(\sigma_0)}_{\kappa_0}
   M^{(\sigma_1)}_{\kappa_0,\kappa_1}
   \ldots
   M^{(\sigma_{n-1})}_{\kappa_{n-2}}\ket{\bm{\sigma}},
\end{align}
where $M^{\sigma_k}_{i,j}$ denotes a rank three tensor and summation is taken over all the possible
basis-state configurations $\bm{\sigma}$ and bond indices $\bm{\kappa}$.
The tensors $M$ are
\begin{align}
   M_{\kappa_0}^{(0)} (\tau) &= \begin{pmatrix} \sin\left(\frac{k}{2}+\tau\right) & \cos\left(\frac{k}{2}+\tau\right) \end{pmatrix}, 
   M_{\kappa_0}^{(1)} (\tau) = \begin{pmatrix} \cos(\tau) & \sin(\tau) \end{pmatrix} \nonumber\\
   M_{\kappa_{q-1},\kappa_{q}}^{(0)} (\tau) &= 
   \begin{pmatrix}
       \cos\left(\frac{k}{2^{q+1}}+\tau\right) & \sin\left(\frac{k}{2^{q+1}}+\tau\right) \\
      -\sin\left(\frac{k}{2^{q+1}}+\tau\right) & \cos\left(\frac{k}{2^{q+1}}+\tau\right)
   \end{pmatrix},
   M_{\kappa_{q-1},\kappa_{q}}^{(1)}  (\tau)= 
   \begin{pmatrix}
      \cos(\tau) & \sin(\tau) \\
      -\sin(\tau) & \cos(\tau)
   \end{pmatrix}\nonumber\\
   M_{\kappa_{n-2}}^{(0)}  (\tau)&= 
   \begin{pmatrix} 
      \cos\left(\frac{k}{2^n}+\tau\right) \\ \sin\left(\frac{k}{2^n}+\tau\right) 
   \end{pmatrix},
   M_{\kappa_{n-2}}^{(1)}  (\tau)= \begin{pmatrix} \sin(\tau) \\ \cos(\tau) \end{pmatrix}. 
\end{align}
where $k=\frac{2\pi}{\lambda}$ is the wavenumber. 
To encode a sine function with non-zero $\phi$, the tensor of one of the chosen site is set to
$\tau=\phi$, and otherwise set $\tau=0$. For the rest of this section, we set $\tau=0$ unless otherwise stated. 

In the computational basis $\ket{0}$ and $\ket{1}$, the reduced density operator of qubit $q$, 
$\rho_q=\mathrm{Tr}_{\bar{q}} \ket{\Psi_{\lambda}^n}\bra{\Psi_{\lambda}^n}$ is obtained by contracting all 
other qubits $\bar{q} \neq q$ and contracting all tensors over the 
the physical indices (upper indices in paranthesis), except for the index of qubit $q$. 

\subsection{Left environment tensor}
To compute the partial trace, we first look at the exact expression of the left environment tensor $\tcl(q)$, 
over the first $q$ qubits, which corresponds to the $q$ largest length scales. 
In terms of local tensors $M$, $\tcl$ is written as
\begin{align}
   \tcl(q)[ \kappa_{q-1}',\kappa_{q-1}]  = \sum_{\bm{\sigma}_{\bar{q}}} \sum_{\bm{\kappa}_q,\bm{\kappa}_q'}
   M^{(\sigma_{0})}_{\kappa_{0}}\ldots M^{(\sigma_{q-1})}_{\kappa_{q-2,\kappa_{q-1}}}
   (M^{(\sigma_{0})}_{\kappa_{0}})^*\ldots (M^{(\sigma_{q-1})}_{\kappa_{q-2,\kappa_{q-1}}})^*
   \nonumber\\
   \label{eq:tclq}
\end{align}
where $M^*$ denotes the element-wise complex conjugation of $M$ and 
$\bm{\sigma}_{q}  = (\sigma_{0},\sigma_{1},\cdots,\sigma_{q-1} )   $,
$\bm{\kappa}_{q}  = (\kappa_{0},\kappa_{1},\cdots,\kappa_{q-2} )   $, and
$\bm{\kappa}_{q}' = (\kappa_{0}',\kappa_{1}',\cdots,\kappa_{q-2}') $
denote the indices of partial summation .

For $q=1$, $\tcl(q)$ is written as 
\begin{align}
   \tcl(1) = 
   \begin{pmatrix}
      \frac{1}{2}\left( 3 + \cos k \right) & \frac{1}{2} \sin k \\ 
      \frac{1}{2}\sin k & \frac{1}{2}\left( 1-\cos k  \right)
   \end{pmatrix}
\end{align}
where, the row dimension goes over the index $\kappa_{q-1}'$ 
and column dimension goes over the index $\kappa_{q-1}$. 

Now we assume that expression for $\tcl(r)$ at $q=r$ is 
\begin{align}
   \tcl(r) &=
   \begin{pmatrix}
      \frac{1}{2} \left[ 2^{r} + \sum_{m=0}^{2^{r}-1} \cos \left( k \frac{m}{2^{r-1}} \right) \right] &
      \frac{1}{2} \sum_{m=0}^{2^{r}-1} \sin \left(  k\frac{m }{2^{r-1}} \right) \\
      \frac{1}{2} \sum_{m=0}^{2^{r}-1} \sin \left(  k\frac{m }{2^{r-1}} \right) &
      \frac{1}{2} \left[ 2^{r} - \sum_{m=0}^{2^{r}-1} \cos \left( k \frac{m}{2^{r-1}} \right) \right]
   \end{pmatrix}
\end{align}
and this is is true for $r=1$.

Now we consider generating $\tcl(r+1)$ from $\tcl(r)$. From the definition of $\tcl(r+1$ (\eqnref{eq:tclq}), 
we get
\begin{align*}
   \tcl(r+1) &= \sum_{\kappa_{r-1},\kappa_{r-1}'} \tcr(r)
\sum_{\sigma_{r-1}}M^{(\sigma_{r})}_{\kappa_{r-1},\kappa_{r}} 
(M^{(\sigma_{r})}_{\kappa_{r-1}',\kappa_{r}'})^* \\
   &= \sum_{\sigma_{r}}\sum_{\kappa_{r-1},\kappa_{r-1}'} 
(M^{(\sigma_{r})}_{\kappa_{r-1}',\kappa_{r}'})^*
\tcl(r)
M^{(\sigma_{r})}_{\kappa_{r-1},\kappa_{r}} 
\end{align*}
This is nothing but a sum of two matrix products 
\begin{align*}
   \tcl(r+1) &=
\begin{pmatrix}
   \cos\frac{k}{2^{r+1}} & -\sin\frac{k}{2^{r+1}} \\
   \sin\frac{k}{2^{r+1}} &  \cos\frac{k}{2^{r+1}}
\end{pmatrix}
\tcl(r)
\begin{pmatrix}
    \cos\frac{k}{2^{r+1}} & \sin\frac{k}{2^{r+1}} \\
   -\sin\frac{k}{2^{r+1}} & \cos\frac{k}{2^{r+1}}
\end{pmatrix}  
+ \tcl(r) \\
             & =
\begin{pmatrix}
   \tcl(r+1)[0,0] & \tcl(r+1)[0,1] \\
   \tcl(r+1)[1,0] & \tcl(r+1)[1,1]
\end{pmatrix},
\end{align*}
where
\begin{align*}
   \tcl(r+1)[0,0]
   =&
   \frac{1}{2}\left[ 2^{r} + \sum_{m=0}^{2^{r}-1}\cos k\frac{m}{2^{r-1}}\cos k\frac{1}{2^r}  
   - \sum_{m=0}^{2^{r}-1} \sin \left(  k\frac{m }{2^{r-1}} \right)\sin k \frac{1}{2^r} \right]
   +
   \frac{1}{2}\left[ 2^{r} + \sum_{m=0}^{2^{r}-1}\cos k\frac{m}{2^{r-1}}  \right] \\
   =&
   \frac{1}{2}\left[ 2^{r+1} + 
   \frac{1}{2} \sum_{m=0}^{2^{r-1}} \cos k\frac{2m+1}{2^{r}} + \cos k\frac{2m-1}{2^{r}}
   +\cos k\frac{2m}{2^{r}}  \right.\\
   &\left. 
    + 
    \frac{1}{2} \sum_{m=1}^{2^{r}-1} 
   \cos\left(  k\frac{2m+1}{2^{r}} \right) - \cos \left(  k\frac{2m-1}{2^{r}} \right)
   \right]\\
   =&
   \frac{1}{2}\left[ 2^{r+1} + 
   \sum_{m=0}^{2^{r}-1} \cos k\frac{2m+1}{2^{r}} +\cos k\frac{2m}{2^{r}}  \right] 
   = 
   \frac{1}{2}\left[ 2^{r+1} + \sum_{m=0}^{2^{r+1}-1} \cos k\frac{m}{2^{r}} \right]
\end{align*}
and we obtained the desired result. 
Similarly for other elements, we obtain
\begin{align*}
   \tcl(r+1)[0,1] = &\tcl(r+1)[1,0] \\
   =&
   \frac{1}{2}\left[ \sum_{m=0}^{2^{r-1}}\cos k\frac{m}{2^{r-1}}\sin k\frac{1}{2^r}
   + \sum_{m=0}^{2^{r}-1} \sin \left(  k\frac{m }{2^{r-1}} \right)\cos k \frac{1}{2^r} 
   + \sum_{m=0}^{2^{r}-1} \sin \left(  k\frac{m }{2^{r-1}} \right)\right] \\
   =&
   \frac{1}{2}
   \sum_{m=0}^{2^{r+1}-1} \sin \left(  k\frac{m}{2^{r}} \right)
\end{align*}
and
\begin{align*}
   \tcl(r+1)[1,1] 
   =&
   \frac{1}{2}\left[ 2^{r} - \sum_{m=0}^{2^{r}-1}\cos k\frac{m}{2^{r-1}}\cos k\frac{1}{2^r}  
   + \sum_{m=0}^{2^{r}-1} \sin \left(  k\frac{m }{2^{r-1}} \right)\sin k \frac{1}{2^r} \right]
   +
   \frac{1}{2}\left[ 2^{r} - \sum_{m=0}^{2^{r}-1}\cos k\frac{m}{2^{r-1}}  \right] \\
   =&
   \frac{1}{2}\left[ 2^{r+1} 
   -\sum_{m=0}^{2^{r}-1} \cos k\frac{2m-1}{2^{r}} -\cos k\frac{2m}{2^{r}}  \right] 
   = 
   \frac{1}{2}\left[ 2^{r+1} - \sum_{m=0}^{2^{r+1}+1} \cos k\frac{m}{2^{r}} \right]
\end{align*}
as desired. 
\subsection{Right environment}
Similarly, we now consider the case of contracting from the right. Here we obtain the right environment $\tcr(q)$, 
which corresponds to the $q$ finest length scales. In terms of local tensors $M$, $\tcr(q)$ is written as
\begin{align}
   \tcr(q)[ \kappa_{n-q-1},\kappa_{n-q-1}' ]  = \sum_{\bm{\sigma}_{\bar{q}}}\sum_{\bm{\kappa}_{\bar{q}},\bm{\kappa}_{\bar{q}}'}
   M^{(\sigma_{n-q})}_{\kappa_{n-q-1},\kappa_{n-q}}\ldots M^{(\sigma_{n-1})}_{\kappa_{n-2}}
   (M^{(\sigma_{n-q})}_{\kappa_{n-q-1}',\kappa_{n-q}'})^* \ldots (M^{(\sigma_{n-1})}_{\kappa_{n-2}'})^*
   \nonumber\\
   \label{eq:deftrfq}
\end{align}
where 
$\bm{\sigma}_{\bar{q} } = \sigma_{n-q} , \sigma_{n-q+1} ,\cdots,\sigma_{n-1}$,
$\bm{\kappa}_{\bar{q} } = \kappa_{n-q-1} , \kappa_{n-q} ,\cdots,\kappa_{n-2}$, and
$\bm{\kappa}_{\bar{q}'} = \kappa_{n-q-1}', \kappa_{n-q}',\cdots,\kappa_{n-2}'$
denote the indices of partial summation. 

Similarly to deriving $\tcr(q)$, we first observe that 
\begin{align}
   \tcr(1) = 
   \begin{pmatrix}
      \frac{1}{2}\left( 1-\cos k2^{1-n}  \right) & \frac{1}{2} \sin k 2^{1-n} \\
      \frac{1}{2} \sin k 2^{1-n} & \frac{1}{2}\left( 3+\cos k2^{1-n}  \right)
   \end{pmatrix}.
\end{align}
where the row dimension goes over index $\kappa_{n-q-1}$ and the column dimension goes over index $\kappa_{n-q-1}'$. 

Now we assume that expression of $\tcr(r)$ at $q=r$ is 
\begin{align}
   \tcr(r) = 
   \begin{pmatrix}
      \frac{1}{2} \left[ 2^{r} - \sum_{m=0}^{2^{r}-1} \cos \left( k \frac{2 m }{2^{n}} \right) \right] &
      \frac{1}{2} \sum_{m=0}^{2^{r}-1} \sin \left( k \frac{2 m }{2^{n}} \right) \\
      \frac{1}{2} \sum_{m=0}^{2^{r}-1} \sin \left( k \frac{2 m }{2^{n}} \right) &
      \frac{1}{2} \left[ 2^{r} + \sum_{m=0}^{2^{r}-1} \cos \left( k \frac{2 m }{2^{n}} \right) \right]
   \end{pmatrix},
\end{align}
which is correct for $r=1$. 
We now derive $\tcr(r+1)$ from $\tcr(r)$ by multiplying the local tensors
\begin{align*}
   \tcr(r+1) &= \sum_{\kappa_{n-r-1},\kappa_{n-r-1}'}
   \tcr(r) 
   \sum_{\sigma_{n-r-1}}
   M^{(\sigma_{n-r-1})}_{\kappa_{n-r-2},\kappa_{n-r-1}}
   (M^{(\sigma_{n-r-1})}_{\kappa_{n-r-2}',\kappa_{n-r-1}'})^*  \\
   & = 
   \sum_{\kappa_{r-1},\kappa_{r-1}'}\sum_{\sigma_{n-r-1}}
   M^{(\sigma_{n-r-1})}_{\kappa_{n-r-2},\kappa_{n-r-1}}\tcr(r)(M^{(\sigma_{n-r-1})}_{\kappa_{n-r-2}',\kappa_{n-r-1}'})^* 
\end{align*}
as we have done for deriving $\tcr$, we obtain 
\begin{align*}
   \tcr(r+1) &= 
   \begin{pmatrix}
       \cos\frac{k}{2^{n-r}} & \sin\frac{k}{2^{n-r}} \\
      -\sin\frac{k}{2^{n-r}} & \cos\frac{k}{2^{n-r}} \\
   \end{pmatrix}
   \tcr(r)
   \begin{pmatrix}
       \cos\frac{k}{2^{n-r}} & -\sin\frac{k}{2^{n-r}} \\
       \sin\frac{k}{2^{n-r}} &  \cos\frac{k}{2^{n-r}} \\
   \end{pmatrix}
   +
   \tcr(r)
\end{align*}
and we obtain the desired result. 

\subsection{Reduced density matrix of qubit $q$ in the thermodynamic limit}
To obtain the reduced density matrix of qubit $q$, we contract the relevant $\tcl$, $M$ and $\tcr$
\begin{align*}
   \rho_q = \frac{1}{\sqrt{\sum_{m=0}^{2^n-1}\sin^2 k\frac{m}{2^n}}}
   \sum_{\kappa_{q-1},\kappa_{q-1}',\kappa_{q},\kappa_{q}'}
   \tcl(q)_{\kappa_{q-1}',\kappa_{q-1}}
   M^{(\sigma_q)}_{\kappa_{q-1},\kappa_{q}}(\phi)
   \tcr(n-q-1)_{\kappa_{q},\kappa_{q}'} %
   (M^{(\sigma_q)}_{\kappa_{q-1}',\kappa_{q}'}(\phi))^*,
\end{align*}

In the limit of $n\to\infty$, we obtain 
\begin{align}
  \rho_q^{\infty} &= \lim_{n\to\infty} \rho_q \nonumber\\
          &= 
   \begin{pmatrix}
      \frac{2k-\sec(2^{-q-1} k)
      \left(\sin((2+2^{-q-1})k+2\phi) - \sin(2^{-q-1}k + 2\phi)\right)}{4 k - 2\sin(2(k+\phi)) + 2\sin(2\phi)} & 
      \frac{2k\cos(2^{-q-1} k) - \sec(2^{-q-1}k)\left(\sin(2(k+\phi)) - \sin(2 \phi)\right)}
      {4 k - 2\sin(2(k+\phi)) + 2\sin(2\phi)} \\
      \frac{2k\cos(2^{-q-1} k) - \sec(2^{-q-1}k)\left(\sin(2(k+\phi)) - \sin(2 \phi)\right)}
      {4 k - 2\sin(2(k+\phi)) + 2\sin(2\phi)} &
      \frac{2k-\sec(2^{-q-1} k)
      \left(\sin((2-2^{-q-1})k+2\phi) - \sin(2\phi-2^{-q-1}k)\right)}{4 k - 2\sin(2(k+\phi)) + 2\sin(2\phi)}
   \end{pmatrix} 
\end{align}
for arbitrary $\phi$ as presented in the Methods.

\subsection{Threshold for Amplitude-Encoded Band Limited functions}
\label{app:QNSSTmultiplefreq}
Finally, we comment on the case of the amplitude-encoded state contains multiple wavelengths greater than $\lambda_{\min}$.
If we focus on qubit $q$, we show that the wavefunction $\ket{\Psi_{\lambda}^\infty}$ for $\lambda_{\min} < \lambda$ 
can be written as
\begin{align}
   \ket{\Psi_{\lambda}^\infty} = \ket{\Psi_{\lambda}^{q}}\ket{\bm{+}}
   + \epsilon_{\lambda}\ket{\mathrm{res}_{\lambda}^{\infty}}
\end{align}
where $\ket{\mathrm{res}_{\lambda}^{\infty}}$ is the anything that is left from the Riemannian approximation beyond qubit $q$, 
which we show that the contribution is exponentially small $\epsilon_{\lambda} = \mathcal{O}(2^{q_c(\lambda)-q})$ 
for $q > q_c(\lambda_{\min})=\lceil \log_2\pi/\lambda_{\min} \rceil$, and $\ket{\bm{+}}$ is padded with $\ket{+}$ states
beyond qubit $q$.

Let $c(\lambda)$ be the spectrum of the field composed of multiple frequencies. The total state $\ket{\Psi}$ is written as follows
\begin{align}
   \ket{\Psi} = \int_{\lambda_{\min}}^1 c(\lambda) \ket{\psi_{\lambda}^{\infty}}
\end{align}
with the normalization condition 
\begin{align}
   \braket{\Psi|\Psi} = \int_{\lambda_{\min}}^{1}\int_{\lambda_{\min}}^{1}d\lambda d\lambda' 
   c^*(\lambda') c(\lambda)  \braket{\psi_{\lambda'}^{\infty}|\psi_{\lambda}^{\infty}}= 1
\end{align}
where $c^{*}(\lambda')$ is a complex conjugate of $c(\lambda')$.
The magnitude of the the total residual of 
$\ket{\Psi}$, $\ket{\mathrm{Err}}=\int_{\lambda_{\min}}^{1}  d\lambda c(\lambda) \epsilon_\lambda \ket{\mathrm{res}_\lambda^\infty}$, is then bounded by 
\begin{align}
   \lVert \ket{\mathrm{Err}} \rVert= \sqrt{\braket{\mathrm{Err}|\mathrm{Err}}} \leq 
   \mathcal{O}(2^{q_c(\lambda_{\min}) - q}) \int_{\lambda_{\min}}^{1} d\lambda |c(\lambda)|.  
\end{align}
We further bound $L_1 = \int_{\lambda_{\min}}^{1} d\lambda |c(\lambda)|$ with the Cauchy Schwarz inequality
\begin{align}
   L_1 \leq \sqrt{\int_{\lambda_{\min}}^{1} d\lambda |c(\lambda)|^2} \sqrt{\int_{\lambda_{\min}}^{1} d\lambda 1^2}
   =  \sqrt{\int_{\lambda_{\min}}^{1} d\lambda |c(\lambda)|^2} \sqrt{1-\lambda_{\min}}. 
\end{align}
Where $L_2 = \int_{\lambda_{\min}}^{1} d\lambda |c(\lambda)|^2$ is the $\ell_2$-norm of $c(\lambda)$. 
$L_2$ is bounded by the spectral feature of $\ket{\Psi}$. Let $g_{\min}$ the infimum of the spectrum of a positive definite kernel 
$G(\lambda',\lambda)=\braket{\Psi^{\infty}_{\lambda'}|\Psi^{\infty}_{\lambda}}$, 
Rayleigh coefficient of $G(\lambda',\lambda)$ and $c(\lambda)$ is bounded by $0 < g_{\min}$, giving the bound on $L_2$
\begin{align}
   \frac{\int_{\lambda_{\min}}^{1}\int_{\lambda_{\min}}^{1} d\lambda'd\lambda  c^*(\lambda')G(\lambda',\lambda)c(\lambda)}{L_2}
   \geq g_{\min}
   \ \rightarrow \ L_2 \leq \frac{1}{g_{\min}}
\end{align}
Altogether, we obtain the bound
\begin{align}
   \lVert \ket{\mathrm{Err}} \rVert \leq \mathcal{O}(2^{q_c(\lambda_{\min})-q})\sqrt{\frac{1-\lambda_{\min}}{g_{\min}}}
   = \frac{1}{\sqrt{g_{\min}}}\mathcal{O}(2^{q_c(\lambda_{\min})-q}),
\end{align}
which does not modify the threshold qubit, but strongly affected by the underlying spectral structure 
characterized by the kernel $G(\lambda',\lambda)$ of $\ket{\Psi}$.

\section{Error scaling}
\label{app:errorscaling}
In this appendix, we provide a formal derivation for the scaling of the circuit error under a local error model. 
We consider a brickwork ansatz circuit, discussed in the Methods section, of depth $D$ on $n$ qubits. 
We assume an error model in which each single-qubit gate can be faulty with an error of less than $\varepsilon$, 
while the $\mathrm{CNOT}$ gates are negligible.
We derive a general upper bound on the global error, which measures the distinguishability 
of the final state from the ideal one. 

\newtheorem{theorem}{Theorem}
\begin{theorem}[Global Error Bound]
Let $U$ be the ideal unitary for a circuit composed of $D$ layers of single-qubit gates, and let $\tilde{U}$ be the corresponding 
faulty unitary in which each of the $n \times D$ single-qubit gates $R$ has an error bounded by $\| \tilde{R} - R \| \le \varepsilon$. 
The global error, measured in the operator norm, is bounded by:
$$
\| U - \tilde{U} \| \le n D \varepsilon
$$
The scaling is therefore at most $O(n D \varepsilon)$.
\end{theorem}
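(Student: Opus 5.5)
The plan is the standard telescoping (hybrid) argument, using unitarity of the ideal gates and the triangle inequality. First I fix the structure. Write $U = G_N G_{N-1}\cdots G_1$ as a product of $N \le nD$ operators. Each $G_j$ is either a single-qubit rotation $R$ tensored with identity on the other qubits, or a CNOT. Write $\tilde U = \tilde G_N\cdots \tilde G_1$ analogously, with $\tilde G_j = G_j$ for CNOTs, since CNOT errors are neglected by assumption. Tensoring with the identity preserves the operator norm, so $\|\tilde G_j - G_j\| = \|(\tilde R - R)\otimes \mathbb{I}\| = \|\tilde R - R\| \le \varepsilon$ for the single-qubit gates and $0$ for CNOTs. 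If a layer is instead grouped as a product of $n$ commuting single-qubit gates, I will simply expand it into its individual factors, so only single-gate differences ever appear.

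Next I use the telescoping identity
\begin{align}
U - \tilde U = \sum_{j=1}^{N} \tilde G_N\cdots \tilde G_{j+1}\,(G_j - \tilde G_j)\,G_{j-1}\cdots G_1 ,
\end{align}
which I will check by noting that consecutive terms cancel. Taking norms with submultiplicativity and the triangle inequality gives $\|U-\tilde U\| \le \sum_j \|\tilde G_N\cdots\tilde G_{j+1}\|\,\|G_j-\tilde G_j\|\,\|G_{j-1}\cdots G_1\|$. The ideal products on the right are unitary, so they have norm $1$.

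The only real obstacle is the prefix $\|\tilde G_N\cdots \tilde G_{j+1}\|$, because the statement does not assert that the faulty gates $\tilde R$ are unitary. I see two ways to handle it.

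\begin{itemize}
\item \emph{Unitary faulty gates.} If the faulty gates are unitary, as in the coherent over- or under-rotation errors natural for parameterized $R_{d,i}(\theta_{d,i})$, this factor is exactly $1$. The sum is then at most $nD\varepsilon$ directly.
\item \emph{Non-unitary faulty gates.} Otherwise I reorder the telescoping so that the faulty operators always sit on the side that is not norm-bounded by $1$. Alternatively I bound $\|\tilde R\|\le 1+\varepsilon$, which yields $\|U-\tilde U\|\le (1+\varepsilon)^{nD}-1$. This equals $nD\varepsilon + O((nD\varepsilon)^2)$, so the linear $O(nD\varepsilon)$ scaling survives in the relevant regime $nD\varepsilon\ll1$.
\end{itemize}

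I will state the unitary case as the precise bound $\|U-\tilde U\|\le nD\varepsilon$ and remark on the first-order non-unitary version. Finally, I will note that for the brickwork ansatz of the Methods there are $2$ rotations per block, and hence $n$ per layer. This gives $N=nD$ and reproduces the gate-count scaling $\mathcal{O}(nD_n)=\mathcal{O}(n^3)$ quoted there.
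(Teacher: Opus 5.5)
Your proposal is correct and is essentially the paper's proof. The paper telescopes first over the $D$ layers and then over the $n$ single-qubit gates inside each layer, whereas you flatten this into one hybrid sum over all gates; both versions discard the products of faulty gates by unitary invariance of the operator norm, which the paper uses implicitly by treating $\tilde U$, and hence each $\tilde R$, as unitary.

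Your caveat about non-unitary faulty gates is well founded rather than a technicality. For example, $R=\mathbb{I}$ and $\tilde R=(1+\varepsilon)\mathbb{I}$ give $\|U-\tilde U\|=(1+\varepsilon)^{nD}-1$, which exceeds $nD\varepsilon$ whenever $nD\ge 2$. So no reordering of the telescoping can save the linear bound for non-contractive faulty gates, and your $(1+\varepsilon)^{nD}-1$ estimate is the right replacement. The linear bound does still hold if the $\tilde R$ are merely contractions, $\|\tilde R\|\le 1$.
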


\begin{proof}
   Let the ideal circuit be a sequence of $D$ unitary layers (including both single-qubit and $\mathrm{CNOT}$ layers), 
$U = U_{D} U_{D-1} \cdots U_1$. 
The faulty circuit is $\tilde{U} = \tilde{U}_{D} \tilde{U}_{D-1} \cdots \tilde{U}_1$, 
in which $\tilde{U}_k = U_k$ if layer $k$ consists of error-free gate implementations.

The difference between the ideal and faulty unitaries can be expressed as a telescoping sum:
\begin{equation}
   U - \tilde{U} = \sum_{D=1}^{D} \left( U_{>D}U_D\tilde{U}_{D>}- U_{>D} \tilde{U}_D\tilde{U}_{D>} \right)
\end{equation}
where $U_{>D}$ is the unitary composed of the layers at depth larger than $D$, and $U_{D>}$ is the unitary composed of the layers at depth smaller than $D$.

By factoring and applying the triangle inequality, we obtain 
\begin{align}
   \| U - \tilde{U} \| &\le \sum_{D=1}^{D} \| U_{>D}(U_D - \tilde{U}_D) \tilde{U}_{D>} \| \nonumber \\
                       &\le \sum_{D=1}^{D} \| U_D - \tilde{U}_D \|,
\end{align}
where we used the norm-conserving properties of unitaries $U$, $V$ acting on $A$ ($\|UAV\|=\|A\|$). 
Since the $\mathrm{CNOT}$ layers are assumed to be perfect, we are left with a sum over the $D$ single-qubit layers, 
which we denote $U_{SQ, D}$ for $D=1, \dots, D$
\begin{equation}
   \| U - \tilde{U} \| \le \sum_{D=1}^{D} \| U_{\mathrm{SQ},D} - \tilde{U}_{\mathrm{SQ},D} \|. 
\end{equation}
For a single layer of $n$ single-qubit gates, $U_{\mathrm{SQ},D} = \bigotimes_{i=0}^{n-1} R_{D,i}$. The faulty layer is 
$\tilde{U}_{\mathrm{SQ},D} = \bigotimes_{i=0}^{n-1} \tilde{R}_{D,i}$. 
The difference is, again, by applying the telescoping sum
\begin{align}
   U_{\mathrm{SQ},D} - \tilde{U}_{\mathrm{SQ},D} 
   &= \sum_{i=0}^{n-1}\tilde{R}_{D,>i}(R_{D,i} - \tilde{R}_{D,i})R_{D,i>}.
\end{align}
Similar to bounding $\| U - \tilde{U} \|$ we obtain 
\begin{align}
   \| U_{SQ,D} - \tilde{U}_{SQ,D} \| &\le \sum_{i=0}^{n-1} \| R_{D,i} - \tilde{R}_{D,i} \| \le \sum_{i=0}^{n-1} \epsilon = n\epsilon. 
\end{align}
Substituting this layer error back into the sum over all $D$ single-qubit layers gives the final bound:
\begin{equation}
   \| U - \tilde{U} \| \le \sum_{D=1}^{D} (n\epsilon) = n D \epsilon
\end{equation}
Since $nD$ is equivalent to the number of parameters in BAC, we obtain the error bound provided in the main text. 
\end{proof}
The bound trivially generalizes for inclusion of two-qubit gate implementation error $\epsilon_2$, 
which adds the extra $n$ and $\epsilon$ dependence ($n D \epsilon + \frac{1}{2}nD \epsilon_2$)
but does not change the overall scaling, and other ansatz format (sum of the errors over all gates).
Since for given $\lambda_c$, the maximum number of qubits required is limited by the quantum Nyquist-Shannon sampling theorem $n\sim -\log_2(\lambda_c)$, we obtain the scaling discussed in the main text. 

\section{Scaling of minimum parameters required for encoding Weierstrass function}
\label{app:weierstrassencoding}

In this appendix, we analyze the scaling of the minimum number of parameters required to variationally encode the Weierstrass function $W_a(x)$ with an infidelity below a given threshold. 
We employ a brickwork ansatz circuit (BAC) where the single-qubit rotations are restricted to the $y$-axis of the Bloch sphere 
($R_{d,i}=\exp(\frac{\theta_{d,i}}{2}\sigma^{y}_i)$). This choice ensures that the amplitudes of the resulting quantum state remain real-valued, which is consistent with the real-valued nature of the target Weierstrass function $f(x)=W_a(x)$ shown in \figref{fig:scaling}a.

\begin{figure*}[t!]
   \includegraphics[scale=1.0]{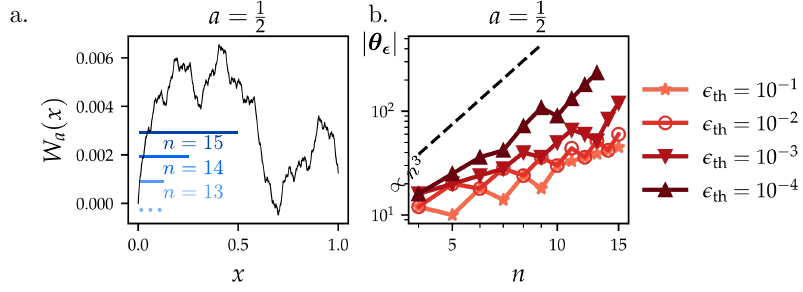}
   \caption{
   \textbf{Parameter scaling for function encoding using brickwork ansatz circuits.}
   \textbf{a.}~Target function: The Weierstrass function $f(x)=W_a(x)$ with $a=1/2$, 
   defined on $0\leq x<1$ and discretized with grid spacing $\Delta x = 2^{-16}$. 
   The horizontal lines in different shades of blue depict the domain for an $n$-qubit state.
   \textbf{b.}~Minimum number of parameters,
   required to encode the function $f(x)$ with an infidelity below a threshold $\epsilon_{\mathrm{th}}$.
   For each $n$ (up to $n=15$), the first $2^n$ points of the discretized function from (\textbf{a}) 
   are variationally encoded with the optimal set of parameters provided in \cite{hashizumedataset2026}.
   A black dashed line shows the cubic scaling, and serves as a guide.
   \label{fig:scaling}}
\end{figure*}

The infidelity $\varepsilon(\bm{\theta})$ between the target state $\ket{W_a}$ and the variational state $U(\bm{\theta})\ket{0}$ is defined as
\begin{equation}
   \varepsilon(\bm{\theta}) = 1-|\bra{W_a}U(\bm{\theta})\ket{0}|^2.
\end{equation}
For each $n \in \{2, \dots, 15\}$, we take the first $2^n$ points of the discretized $W_a(x)$ as the target function values, as shown in \figref{fig:scaling}a. The resulting minimum number of parameters $|\bm{\theta}_{\epsilon}|$ required to achieve $\varepsilon \leq \epsilon_{\mathrm{th}}$ for various $n$ and $\epsilon_{\mathrm{th}}$ is plotted in \figref{fig:scaling}b.

For high-precision encoding with $\epsilon_{\mathrm{th}} \leq 10^{-3}$, we find that the scaling approaches the expected $O(n^3)$ behavior, which is consistent with the complexity of the fractal-like features of the Weierstrass function. In contrast, for a larger threshold of $\epsilon_{\mathrm{th}} \geq 0.01$, the growth of $|\bm{\theta}_{\epsilon}|$ is significantly slower, suggesting that the practically required circuit depth may be much lower if high precision is not required. These findings provide further evidence that VQAs offer a quantum advantage in regimes dominated by fine-scale features, where classical computational costs become intractable.

\end{document}